\documentclass[11pt]{article}

\usepackage[T1]{fontenc}
\usepackage[utf8]{inputenc}
\usepackage{authblk}
\usepackage{etex}
\usepackage[english]{babel}
\usepackage[a4paper,margin=1in]{geometry} 
\usepackage{xytree}
\usepackage{amsmath, amssymb} 
\usepackage{amsthm}
\usepackage{enumerate}
\usepackage{enumitem}
\usepackage{appendix}
\usepackage{listings}
\usepackage{color}
\usepackage[font=scriptsize]{subfig} 
\usepackage{fancyhdr}
\usepackage{placeins}
\usepackage{longtable}
\usepackage{dcolumn}
\usepackage{etoolbox}

\makeatletter
\patchcmd {\LT@start}
{\vfil \break}
{\if@nobreak\else \vfil \break \fi}
{\typeout{Patching longtable succeeded!}}
{\typeout{Patching longtable failed!}\ERROR}
\patchcmd {\LT@start}
{\penalty \z@}
{\if@nobreak\else \penalty \z@ \global \@nobreakfalse \fi}
{\typeout{Patching longtable succeeded!}}
{\typeout{Patching longtable failed!}\ERROR}
\makeatother

\makeatletter
\def\@seccntformat#1{\@ifundefined{#1@cntformat}%
   {\csname the#1\endcsname\quad}  
   {\csname #1@cntformat\endcsname}
}
\let\oldappendix\appendix 
\renewcommand\appendix{%
    \oldappendix
    \newcommand{\section@cntformat}{\appendixname~\thesection\quad}
}
\makeatother

\usepackage[hidelinks]{hyperref}
\usepackage{tikz}
\usepackage{tikz-qtree}
\pagestyle{plain}

\usetikzlibrary{shapes}
\usetikzlibrary{arrows}
\usetikzlibrary{decorations}
\usetikzlibrary{decorations.markings}
\usetikzlibrary{backgrounds}
\usetikzlibrary{positioning}
\usetikzlibrary{calc}
\tikzset{
did/.style={rectangle,draw=none,fill=none,inner sep=0cm},
blu/.style={fill=blue!20,circle},
Blu/.style={fill=blue!50,circle},
red/.style={fill=red!20,circle},
Red/.style={fill=red!50,circle},
white/.style={draw=black,circle,minimum size=.15cm},
backstd/.style={draw=black,fill=white,rounded corners},
backgr/.style={draw=black,fill=green!20,rounded corners},
background rectangle/.style={draw=black,fill=white,rounded corners},
every node/.style=white,
show background rectangle=true,
pre/.style={->,>={latex'},thick}, 
prer/.style={color=red,->,>={latex'},thick},
post/.style={<-,>={latex'},thick}, 
postr/.style={color=red,<-,>={latex'},thick},
Pre/.style={->,>={latex'},ultra thick}, 
Prer/.style={color=red,->,>={latex'},ultra thick},
Post/.style={<-,>={latex'},ultra thick}, 
Postr/.style={color=red,<-,>={latex'},ultra thick},
b/.style={-,>={latex'},thick}, 
B/.style={-,>={latex'},ultra thick}, 
r/.style={color=red,-,>={latex'},thick},
R/.style={color=red,-,>={latex'},ultra thick},
rect/.style={rectangle,minimum height=.01cm}
}

\definecolor{listinggray}{gray}{0.9}
\definecolor{lbcolor}{rgb}{0.9,0.9,0.9}

\lstset{language=java,
numbers=left, numberstyle=\tiny, stepnumber=1, numbersep=5pt,
tab=\rightarrowfill,
showspaces=false,
showstringspaces=false,
tabsize=2,
frame=shadowbox, rulesepcolor=\color{listinggray},
mathescape=true,
captionpos=b
}

\theoremstyle{plain}
\newtheorem{theorem}{Theorem}[section] 

\newtheorem{lemma}[theorem]{Lemma}

\newtheorem{corollary}[theorem]{Corollary}

\theoremstyle{definition}
\newtheorem{definition}[theorem]{Definition}

\theoremstyle{remark}
\newtheorem{remark}[theorem]{Remark}


\newcommand{\leql}{\ensuremath\leq_{ql}}
\newcommand{\eql}{\ensuremath=_{ql}}
\newcommand{\pl}{quasilinear}


\newcommand{\sm}{\textsc{Big}}
\newcommand{\bgdomv}{\textsc{BipGDominatedVertex}}
\newcommand{\gdomv}{\textsc{GraphDominatedVertex}}
\newcommand{\hdome}{\textsc{HypergraphDominatedEdge}}
\newcommand{\lalign}{\textsc{LocalStringAlign}}
\newcommand{\doms}{\textsc{SpernerFamily}}
\newcommand{\msimp}{\textsc{MaximalElementsFamily}}
\newcommand{\hasse}{\textsc{SubsetGraph}}
\newcommand{\ortbv}{\textsc{OrthogonalityBinaryVectors}}
\newcommand{\ksat}{\textsc{$k$-Sat}}
\newcommand{\ksatt}{\text{\ksat *}}
\newcommand{\tds}{\textsc{TwoDisjointSets}}
\newcommand{\betv}{\textsc{BetweennessCentralityVertex}}
\newcommand{\bet}{\textsc{BetweennessCentrality}}
\newcommand{\diam}{\textsc{SplitGraphDiameter2Or3}}
\newcommand{\clos}{\textsc{MinimumClosenessCentrality}}
\newcommand{\tcov}{\textsc{TwoCovering}}
\newcommand{\matzero}{\textsc{ZerosMatrixMultiplication}}
\newcommand{\bgdt}{\textsc{Bipartite3DominatingSet}}
\newcommand{\bsdp}{\textsc{BipartiteSubset2DominatingSet}}
\newcommand{\gdt}{\textsc{3DominatingSet}}
\newcommand{\tsum}{\textsc{3Sum}}
\newcommand{\diamg}{\textsc{GraphDiameter2Or3}}
\newcommand{\hyper}{\textsc{HyperbolicityWith2FixedVertices}}


\newcommand{\sms}{Big}
\newcommand{\bgdomvs}{Bipartite \\Graph \\ Dominated \\ Vertex}
\newcommand{\gdomvs}{Graph \\ Dominated \\ Vertex}

\newcommand{\laligns}{Local String \\ Alignment}
\newcommand{\domss}{Sperner \\ Family}
\newcommand{\msimps}{Maximum \\ Simple Family \\ Of Sets}
\newcommand{\hasses}{Subset \\ Graph}
\newcommand{\ortbvs}{Orthogonality of \\ Binary Vectors}

\newcommand{\ksatts}{$k$-Sat*}
\newcommand{\tdss}{Two \\ Disjoint \\ Sets}
\newcommand{\betvs}{Betweenness \\ Centrality of $v$}
\newcommand{\bets}{Betweenness \\ Centrality}
\newcommand{\diams}{Split Graph \\ Diameter 2 or 3}
\newcommand{\closs}{Minimum \\ Closeness \\ Centrality}
\newcommand{\tcovs}{Two \\ Covering}
\newcommand{\matzeros}{Zeros In \\ Matrix \\ Multiplication}
\newcommand{\bgdts}{Bipartite \\ 3-Dominating \\ Set}
\newcommand{\bsdps}{Bipartite \\ Subset \\ 2-Dominating \\ Set}
\newcommand{\gdts}{3-Dominating \\ Set}
\newcommand{\diamgs}{Graph \\ Diam.~2 or 3}
\newcommand{\hypers}{Hyperbolicity \\ with Two \\ Fixed Vertices}

\renewcommand{\O}{\mathcal O}
\newcommand{\otilde}{\tilde{\O}}
\newcommand{\C}{\ensuremath\mathcal{C}}

\newenvironment{prob}[3]
{

\begin{description}[leftmargin=2.3cm,style=sameline,topsep=\baselineskip ,parsep=0pt,itemsep=0pt,partopsep=0pt,font=\normalfont \itshape]
\item[Problem:] \textsc{#1}
\item[Input:] #2
\item[Output:] #3
\end{description}

}

\makeindex

\title{Into the Square \\ On the Complexity of Quadratic-Time Solvable Problems}

\author[1]{Michele Borassi}
\author[2]{Pierluigi Crescenzi}
\author[3,4]{Michel Habib}
\affil[1]{IMT Insitute for Advanced Studies, 55100 Lucca, Italy, \url{michele.borassi@imtlucca.it}}
\affil[2]{Universit\`a di Firenze, Dipartimento di Ingegneria dell'Informazione, 50134 Firenze, Italy, \url{pierluigi.crescenzi@unifi.it}}
\affil[3]{LIAFA, UMR 7089 CNRS \& Universit\'e Paris Diderot, France, \url{habib@liafa.univ-paris-diderot.fr}}

\date{\today}

\begin{document}
\sloppy
\ifdefined \all
\tableofcontents
\fi

\maketitle

\thispagestyle{empty}

\begin{abstract}
In this paper, we will analyze several quadratic-time solvable problems, and we will classify them into two classes: problems that are solvable in \emph{truly subquadratic} time (that is, in time $\O(n^{2-\epsilon})$ for some $\epsilon>0$) and problems that are not, unless the well known Strong Exponential Time Hypothesis (in short, SETH) is false. In particular, we will prove that some quadratic-time solvable problems are indeed easier than expected. We will provide an algorithm that computes the transitive closure of a directed graph in time $\O(mn^{\frac{\omega+1}{4}})$, where $m$ denotes the number of edges in the transitive closure and $\omega$ is the exponent for matrix multiplication. As a side effect of our analysis, we will be able to prove that our algorithm runs in time $\O(n^{\frac{5}{3}})$ if the transitive closure of the graph is sparse. The same time bounds hold if we want to check whether a graph is transitive, by replacing $m$ with the number of edges in the graph itself. As far as we know, this gives us the fastest algorithm for checking whether a sparse graph is transitive. Finally, we will also apply our algorithm to the comparability graph recognition problem (which dates back to 1941): also in this case, we will obtain the first truly subquadratic algorithm. In the second part of the paper we will deal with hardness results. In particular, we will start from an artificial quadratic-time solvable variation of the \ksat\ problem and we will construct a graph of Karp reductions, proving that a truly subquadratic-time algorithm for any of the problems in the graph falsifies SETH. More specifically, the analyzed problems are the following: computing the subset graph, finding dominating sets, computing the betweenness centrality of a vertex, computing the minimum closeness centrality, and computing the hyperbolicity of a pair of vertices. We will also be able to include in our framework three proofs already appeared in the literature, concerning the problems of distinguishing between split graphs of diameter 2 and diameter 3, of solving the local alignment of strings and of finding two orthogonal binary vectors inside a collection.

\end{abstract}

\newpage

\pagenumbering{arabic}

\section{Introduction}

Since the very beginning of theoretical computer science and until recent years, the duality between NP-hard problems and polynomial-time solvable problems has been considered the threshold distinguishing ``easy'' from ``hard'' problems. However, polynomial-time algorithms might not be as efficient as one expects: for instance, in real-world networks with millions or billions of nodes, also quadratic-time algorithms might turn out to be too slow in practice, and a \emph{truly subquadratic} algorithm would be a significant improvement, where an algorithm is said to be truly subquadratic if its time-complexity is $\O(n^{2-\epsilon})$ for some $\epsilon>0$.

\subsection{Subquadratic-Time Results}

In the first part of this paper, we will analyze two well-known problems (that is, checking whether a graph is transitive and recognizing comparability graphs) and we will show that they are solvable in truly subquadratic time. For what concerns the transitivity problem, our main contribution is a new analysis of an old algorithm that finds the transitive closure of a graph \cite{Goralcikova1979}. This analysis will lead us to a simple modification of the algorithm itself, that will provide an $\O\left(mn^{\frac{\omega+1}{4}}\right)$ algorithm, where $m$ is the number of edges in the transitive closure and $\omega$ is the exponent for matrix multiplication, whose current value is $2.3727$ \cite{Williams2012}. As a consequence, we will be able to check if a graph is transitive in truly subquadratic time: as far as we know, no truly subquadratic algorithm was previously known for this problem, although many papers have been published on the computation of the transitive closure \cite{Chen2003,Blelloch2008,Lacki2011}. As a side effect of our analysis, we will be able to prove that our algorithm runs in time $\O(n^{\frac{5}{3}})$ in graphs whose transitive closure is sparse: as far as we know, this gives us the fastest algorithm for checking whether a sparse graph is transitive. More importantly, combining this result with the results in \cite{Habib2000}, we will be able to provide a truly subquadratic algorithm for recognizing comparability graphs, a widely studied graph class (for more information on comparability graphs, we refer to~\cite{Brandstadt1999}). As far as we know, the existence of such an algorithm was also not known before our result, even if this class of graphs is quite old (the oldest mention we were able to find dates back to 1941~\cite{Dushnik1941}). 


\subsection{Hardness Results}

Following the main ideas behind the theory of NP-completeness, and not being able to show that a specific polynomial-time solvable problem might or might not admit a faster algorithm, researchers have recently started to prove that the existence of such an algorithm would imply faster solution for many other problems. As an example, a huge amount of work started from the analysis of the \tsum\ problem, which consists of, given three sets $A$, $B$ and $C$ of integers, deciding whether there exists $a \in A, b \in B$, and $c \in C$ such that $a+b+c=0$. This problem has been widely studied and, as far as we know, the best algorithm has been provided in \cite{Baran2007}: this algorithm is subquadratic, but not truly subquadratic. The \tsum\ problem has then become a starting point for proving the ``hardness'' of many other problems, especially in algebraic geometry (for example, we refer the interested reader to \cite{King2004}). Notice that all these results do not deal with the notion of completeness, but they simply prove that ``a problem is harder than another'', relying on the fact that the easiest problem has been studied for years and no efficient algorithm has been found.

A more recent develop of this field is based on the Strong Exponential Time Hypothesis (SETH), used as a tool to prove the hardness of polynomial-time solvable problems. This hypothesis, stated in \cite{Impagliazzo2001}, says that there is no algorithm for solving the \ksat\ problem in time $\O((2-\epsilon)^n)$, where $\epsilon>0$ does not depend on $k$. Successively, researchers have started to use it in order to prove hardness results (see for example \cite{Williams2010,Patrascu2010}, where the authors address the hardness of many problems, like the all-pairs-shortest-paths, finding triangles in a graph, 2-\textsc{Sat}, $k$-dominating set, and some generalization of matrix multiplication). Starting from these works, many other hardness results based on SETH have been published, and many of them deal with dynamic problems (see, for instance, \cite{Abboud2014a}). As an example, it is worth mentioning the diameter computation, that is, given a graph, finding the maximum distance between two vertices. Despite numerous papers on the topic, no truly subquadratic algorithm has been found so far (for more details on the diameter computation, we refer to \cite{Takes2011,Takes2013,Crescenzi2013,Borassi2014}). However, a reason for this behavior was found in \cite{Roditty2013}, where it is proved that a truly subquadratic algorithm to distinguish graphs of diameter 2 and 3 would falsify SETH (see also \cite{Roditty2014}). This result is meaningful both because it gives a lower bound on the complexity of the diameter problem, and because it sheds some light on SETH, which is becoming more and more central in modern computer science. Other similar result are the hardness of the local sequence alignment problem, proved in \cite{Abboud2014}, and the hardness of finding two orthogonal binary vectors in a collection \cite{Williams2014}.

On the ground of this approach, in the second part of this paper we will show that several well-known quadratic-time solvable problems are not solvable in truly subquadratic time, unless SETH is false. As a first step, we will define the problem \ksatt\ (obtained through an ``artificial'' modification of the input of the \ksat\ problem), which cannot be solved in truly subquadratic time, unless SETH is false (note that the use of the \ksatt\ problem has already been implicitly suggested in \cite{Williams2010}). We will then design several Karp-reductions that preserve truly subquadratic-time resolvability, and we will use these reductions in order to prove the hardness of several other problems (note that all problems considered, apart from the $3$-dominating set, are actually solvable in quadratic time, so that the lower and upper bounds coincide, apart from logarithmic factors). More specifically, the analyzed problems are the following.

\begin{description}
\item[\hasse:] given a collection $\C$ of subsets of a given ground set $X$, find the subset graph of $\C$. The subset graph is defined as a graph whose vertices are the elements of $\C$, and containing an edge $(C,C')$ if $C \subseteq C'$. For this problem, the first subquadratic algorithm was proposed in \cite{Yellin1993}, and in \cite{Pritchard1999,Elmasry2010} matching lower bounds are proved. However, these lower bounds are based on the number of edges in the subset graph, which might be quadratic with respect to the input size, apart from logarithmic factors. Our results show that the complexity of computing the subset graph is not due to the output size only, but it is intrinsic: in particular, we will prove that even deciding whether the subset graph has no edge is hard. This excludes the existence of a truly subquadratic algorithm to check if a solution is correct, or a truly subquadratic algorithm algorithm for instances where the output is sparse.

\item[\betv:] the betweenness centrality of a vertex in a graph is a widely used graph parameter related to community structures, defined in \cite{Freeman1977} (for more information we refer to the book \cite{Newman2010} and the references therein). Despite numerous attempts like \cite{Brandes2001,Bader2007,Edmonds2010}, there exist no truly subquadratic algorithm computing the betweenness centrality, even of a single vertex. Moreover, in \cite{Bader2007}, it is said that finding better results for approximating the betweenness centrality of all vertices (\bet) is a ``challenging open problem''. Our analysis does not only prove that computing the betweenness centrality of all vertices in subquadratic time is against SETH, but it also presents the same result for computing the betweenness of a single vertex.

\item[\clos:] another fundamental parameter in graph analysis is closeness centrality, defined for the first time in 1950 \cite{Bavelas1950} and recently reconsidered when analyzing real-world networks (for the interested reader, we refer to \cite{Latora2007} and the references therein). This problem has also raised algorithmic interest, and the most recent result is a very fast algorithm to approximate the closeness centrality of all vertices \cite{Cohen2014}. In this paper, we will prove for the first time the hardness of finding the ``least central'' vertex with respect to this measure. Simple consequences of this results are the hardness of computing in truly subquadratic time the closeness centrality of all vertices, or of extracting a ``small enough'' set containing all peripheral vertices.

\item[\hyper:] the Gromov hyperbolicity of a graph \cite{Gromov1987} is another parameter that recently got the attention of researchers in the field of network analysis. This parameter has relations with the chordality of a graph \cite{Wu2011}, and with diameter and radius computation \cite{Chepoi2008,Crescenzi2013}. In \cite{Cohen2012}, it is provided an algorithm with time-complexity $\O(n^4)$, but much more efficient in practice, while in \cite{Fournier2012} an algorithm with time-complexity $\O(n^{3.69})$ is described. The latter paper also provides an algorithm with complexity $\O(n^{2.69})$ to compute the hyperbolicity when a vertex is fixed. One might think that if two vertices are fixed, then the bound becomes subquadratic: we will prove that this is not true unless SETH is false. This is a significant result on its own right, but it can also be used in the design of algorithms that compute the hyperbolicity of the whole graph.

\item[\gdt:] the last ``classical'' problem analyzed is finding dominating sets in a graph, which is one of the 21 Karp's NP-complete problems \cite{Karp1972}. Researchers have mainly tried to efficiently determine a $k$-dominating set when $k$ is fixed \cite{Eisenbrand2004}. Also negative results were published on the minimum dominating set problem: in \cite{Patrascu2010} it is proved that finding a dominating set of size $k$ in $\O(n^{k-\epsilon})$ would falsify SETH. Our paper proves that a subquadratic-time algorithm for the $3$-dominating set problem would falsify SETH. This result is complementary to the result of \cite{Patrascu2010}: their result works better on dense graphs, while our result works better on sparse graphs. We will also prove that, given a bipartite graph $G=(V,W,E)$, finding a pair of vertices in $V$ that dominates $W$ is hard.

\end{description}

We will also be able to include in our framework three proofs already appeared in the literature, concerning the problems of distinguishing between split graphs of diameter 2 and diameter 3 \cite{Roditty2013}, of solving the local alignment of sequences \cite{Abboud2014}, and of finding orthogonal vectors in a collection \cite{Erickson1996}. 

In order to prove our reductions, we have also analyzed several other problems, that play the role of ``intermediate steps''. These intermediate steps are very natural problems, and they can be considered as significant results on their own right. For instance, many of these problems deal with a collection $\mathcal{C}$ of subsets of a given ground set $X$, and try to find particular pairs in the collection (\tds, \doms, \tcov). These problems are very natural, but we have not been able to find any paper on the topic, apart from the \doms\ case, that is, finding whether there exists two sets $C \subseteq C'$ in the collection $\C$. Such a family, also named clutter, has been analyzed in many cases (see for example \cite{Engel1997,Bey2002}), but no recognition algorithm has been provided, yet. For each of these set problems, this paper deals with two variations: in the first one, the size of the collection is exponential with respect to the size of $X$, while in the other one any size is accepted (so the problem becomes harder). Other problems have been obtained by ``rephrasing'' these problems in terms of graphs, for example finding dominated vertices (\bgdomv\ and \gdomv). We believe that these results are interesting in their own right, but they also can be used as intermediate points for new reductions, without restarting from SETH for any new hardness proof.

\subsection{Structure of the Paper}

In Section \ref{sec:easy} we will discuss the two problems that, although difficult at first glance, actually can be solved in truly subquadratic time. In Section \ref{sec:hard} we will provide the framework in which all hardness results are given and we will state and prove all reductions. Section \ref{sec:conc} concludes the paper and provides some open problems. 

\section{Transitive Closure and Comparability Graph Test}\label{sec:easy}

In this section, we will prove that the complexity of recognizing transitive directed graphs is truly subquadratic. As a result, we will also be able to provide a truly subquadratic algorithm to recognize comparability graphs.

Let us start by assuming without loss of generality that the input graph is acyclic, since the transitive closure of a connected component is a clique, and it is possible to find connected components in linear time.

The new algorithm on acyclic graphs relies heavily on a new analysis of the old algorithm published in \cite{Goralcikova1979} (the pseudo-code is given in Algorithm \ref{alg:trans}). This analysis is provided by the following theorem.

\begin{theorem} \label{thm:transbase}
Let $\alpha$ be any real value in $[0,1]$. Algorithm \ref{alg:trans} finds the transitive closure of the input graph $G$ in time $\O(\frac{m^2}{n^{2\alpha-1}}+mn^\alpha)$, where $m$ is the number of edges in the transitive closure.
\end{theorem}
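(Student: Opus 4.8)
The plan is to charge the running time of the Goralčíková--Koubek procedure of Algorithm~\ref{alg:trans} to the edges of the \emph{transitive reduction} $G_{\mathrm{red}}$ of $G$ (the unique minimal subgraph of a DAG having the same reachability relation, which for a DAG is a subgraph of $G$), and then to balance this cost against a scanning cost by means of the threshold $n^{\alpha}$. Recall how the procedure works: it processes the vertices in reverse topological order, and while processing $v$ it maintains the set $D(v)$ of proper descendants of $v$, initially empty, scanning the out-neighbours $w$ of $v$ in $G$ in \emph{increasing} topological order and performing $D(v)\leftarrow D(v)\cup\{w\}\cup D(w)$ exactly for those $w$ not already in $D(v)$. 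Storing each $D(\cdot)$ as a linked list and keeping one global Boolean array that records membership in the current $D(v)$, a membership test costs $\O(1)$, one union step costs $\O(|D(w)|)$, and clearing the array after $v$ is done costs $\O(|D(v)|)$. Since $\sum_v|D(v)|=m$ (each $D(v)$ is, at the end, the set of out-neighbours of $v$ in the transitive closure) and $|E(G)|\le m$, the total cost is $\O(n+m)$ plus $S:=\sum_{(v,w)\in E(G_{\mathrm{red}})}|D(w)|$ for the union steps; so it suffices to prove $S=\O\!\left(\frac{m^2}{n^{2\alpha-1}}+mn^{\alpha}\right)$, the $\O(n+m)$ overhead being absorbed into this bound.

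The first step is to determine which out-neighbours actually trigger a union. Because they are scanned in increasing topological order, an induction along that order shows that $w$ fails to trigger a union exactly when $w$ has already been reached from $v$ through an earlier out-neighbour, that is, exactly when $v$ reaches $w$ by a path of length at least $2$; and that is precisely the characterisation of the \emph{non}-reduction edges of a DAG. Hence the out-neighbours of $v$ triggering a union are exactly its out-neighbours in $G_{\mathrm{red}}$, which legitimises the decomposition of the cost into $\O(n+m)+S$.

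To bound $S$, call a vertex \emph{heavy} if $|D(v)|>n^{\alpha}$ and \emph{light} otherwise, and let $h$ be the number of heavy vertices. Two observations do the work. First, $\sum_v|D(v)|=m$ forces $h<m/n^{\alpha}$. Second, if $w$ is heavy and $(v,w)\in E(G_{\mathrm{red}})$ then $D(w)\cup\{w\}\subseteq D(v)$, so $v$ is heavy as well; thus every $G_{\mathrm{red}}$-edge whose head is heavy also has a heavy tail, so there are fewer than $h^{2}$ such edges. Splitting $S$ according to whether the head $w$ of the edge is light or heavy, the light part is at most $n^{\alpha}\cdot|E(G_{\mathrm{red}})|\le n^{\alpha}m$ (using $|E(G_{\mathrm{red}})|\le|E(G)|\le m$), while the heavy part is at most (number of heavy-headed reduction edges) times $\max_w|D(w)|$, hence less than $h^{2}\cdot n<(m/n^{\alpha})^{2}\cdot n=m^{2}/n^{2\alpha-1}$. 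Adding the two parts and the overhead yields the stated bound.

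I expect the delicate point to be precisely the equivalence ``triggers a union $\iff$ edge of $G_{\mathrm{red}}$'' of the second paragraph: it relies on scanning the out-neighbours in the right order and on the small induction above, and it is what makes the charging scheme sound. Once it is in place, the remainder is just the bookkeeping built on the two observations about heavy vertices, together with the elementary identity $\sum_v|D(v)|=m$.
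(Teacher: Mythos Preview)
Your proof is correct and follows essentially the same route as the paper: split vertices into \emph{light} ($|D(v)|\le n^{\alpha}$) and \emph{heavy} ($|D(v)|>n^{\alpha}$), bound the light contribution by $mn^{\alpha}$ using $|E|\le m$, and bound the heavy contribution by $h^{2}n\le m^{2}/n^{2\alpha-1}$ using that a heavy head forces a heavy tail and that $h<m/n^{\alpha}$ follows from $\sum_v|D(v)|=m$.

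The one difference worth flagging is your detour through the transitive reduction. You describe the variant of the Goral\v{c}\'{\i}kov\'{a}--Koubek procedure that skips an out-neighbour $w$ already in $D(v)$, and you correctly show that the surviving edges are exactly those of $G_{\mathrm{red}}$. The paper's Algorithm~\ref{alg:trans} as written, however, performs the union for \emph{every} out-neighbour, so its cost is $\sum_{(v,w)\in E(G)}|N'(w)|$ rather than your $\sum_{(v,w)\in E(G_{\mathrm{red}})}|D(w)|$. This does not matter for the bound: your heavy/light argument goes through verbatim with $E(G)$ in place of $E(G_{\mathrm{red}})$, since $|E(G)|\le m$ and the implication ``heavy head $\Rightarrow$ heavy tail'' holds for any edge of $G$. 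So the reduction-edge refinement is sound but unnecessary here; it would only buy something if one wanted a bound in terms of $|E(G_{\mathrm{red}})|$ rather than $m$.
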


Before proving the theorem, we will apply it in the design of truly subquadratic algorithms, both for checking transitivity closure and for comparability graph recognition. 
Note that the algorithm for sparse graphs is practical, since it is purely combinatoric and it does not rely on matrix multiplication like the one in \cite{Blelloch2008}.

\begin{corollary}
If the transitive closure of a graph is sparse, then it can be computed in time $\O(n^\frac{5}{3})$.
\end{corollary}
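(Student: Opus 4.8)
The plan is to simply instantiate Theorem~\ref{thm:transbase} with a well-chosen value of the free parameter $\alpha$. For a sparse graph the number $m$ of edges in the transitive closure is $\O(n)$ (this is the notion of sparseness I adopt, and it is exactly what the hypothesis "the transitive closure is sparse" provides). Substituting $m = \O(n)$ into the bound $\O\!\left(\frac{m^2}{n^{2\alpha-1}} + mn^\alpha\right)$ turns it into $\O\!\left(n^{3-2\alpha} + n^{1+\alpha}\right)$, which is a function of the single variable $\alpha \in [0,1]$.

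First I would optimize this expression over $\alpha$. The term $n^{3-2\alpha}$ is decreasing in $\alpha$ while $n^{1+\alpha}$ is increasing in $\alpha$, so the sum is minimized (up to a constant factor) when the two exponents coincide, i.e.\ when $3 - 2\alpha = 1 + \alpha$, which gives $\alpha = \tfrac{2}{3}$; note that $\tfrac23 \in [0,1]$, so this choice is admissible in Theorem~\ref{thm:transbase}.

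Then I would plug $\alpha = \tfrac23$ back in. We get $2\alpha - 1 = \tfrac13$ and $1 + \alpha = \tfrac53$, so the first term becomes $\frac{m^2}{n^{1/3}} = \O\!\left(\frac{n^2}{n^{1/3}}\right) = \O(n^{5/3})$ and the second becomes $mn^{2/3} = \O(n\cdot n^{2/3}) = \O(n^{5/3})$. Summing, Algorithm~\ref{alg:trans} run with the parameter fixed to $\alpha = \tfrac23$ computes the transitive closure in time $\O(n^{5/3})$, which is the claim. Since $\alpha$ is a parameter fixed in advance rather than something the algorithm must estimate, no prior knowledge of $m$ is needed to run the algorithm at this setting.

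There is essentially no obstacle here beyond Theorem~\ref{thm:transbase} itself, which does all the work; the only point deserving a moment of care is the precise meaning of ``sparse''. If one only assumed $m = \O(n)$ up to logarithmic factors rather than strictly $m = \O(n)$, the identical computation would yield $\O(n^{5/3})$ up to logarithmic factors, so the clean exponent-$\tfrac53$ bound genuinely relies on the linear (not merely near-linear) bound on the number of edges in the transitive closure.
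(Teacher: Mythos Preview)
Your proof is correct and follows exactly the paper's approach: apply Theorem~\ref{thm:transbase} with $\alpha=\tfrac{2}{3}$ and use that $m=\O(n)$ for a sparse transitive closure. The paper's proof is the one-line version of what you wrote; your additional optimization argument and discussion of sparseness are fine elaborations but not required.
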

\begin{proof}
Apply the previous theorem with $\alpha=\frac{2}{3}$.
\end{proof}

In order to obtain a truly subquadratic algorithm for any graph, we need to pair the previous algorithm with matrix multiplication: as a consequence, the algorithm in the following corollary might have huge constants hidden inside the $\O$ notation, differently from the previous one. 

\begin{corollary}
It is possible to compute the transitive closure of any graph in time $\O(mn^{\frac{\omega+1}{4}})$, where $m$ is the number of edges in the transitive closure.
\end{corollary}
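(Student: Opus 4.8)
The plan is to balance the two terms in Theorem~\ref{thm:transbase} by choosing $\alpha$ as a function of the (unknown) density $m/n$, and then observe that with the right trade-off the worst case is governed by a single power of $n$ times $m$. Recall that the bound given by the theorem is $\O\!\left(\frac{m^2}{n^{2\alpha-1}}+mn^\alpha\right)$. The first term decreases in $\alpha$ and the second increases in $\alpha$, so the optimal choice sets them (roughly) equal: $\frac{m^2}{n^{2\alpha-1}} = mn^\alpha$, i.e.\ $m = n^{3\alpha-1}$, which gives $\alpha = \frac{1}{3}\log_n m + \frac13$. Substituting this back into either term yields a running time of the form $m \cdot n^{\alpha} = m\, n^{(\log_n m)/3 + 1/3} = m\cdot m^{1/3} n^{1/3}$; but $m \le n^2$, so $m^{1/3} \le n^{2/3}$ would only give $mn^{5/3}$, which is not what we want. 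So the naive balancing is not quite enough, and the real point is that we should instead combine Algorithm~\ref{alg:trans} with fast matrix multiplication for the dense regime.

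Concretely, I would split into two cases according to a threshold on $m$ (equivalently, run both algorithms and take the faster). First, run Algorithm~\ref{alg:trans} with the value of $\alpha$ that balances the two terms \emph{as a function of $m$}: this is legitimate because $m$, the number of edges in the transitive closure, can be computed (or tightly estimated) cheaply, or one can simply try all $O(\log n)$ dyadic guesses for $m$. For the range of $m$ where Algorithm~\ref{alg:trans} already runs in time $\O(mn^{(\omega+1)/4})$, we are done. Second, for the complementary (dense) range, use a transitive-closure algorithm based on Boolean matrix multiplication, which runs in time $\O(n^{\omega})$; one checks that in the dense regime $n^\omega = \O(mn^{(\omega+1)/4})$ as well. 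The crossover point is exactly where $mn^{(\omega+1)/4} = n^\omega$, i.e.\ $m = n^{\omega - (\omega+1)/4} = n^{(3\omega-1)/4}$, and a short computation shows that on the sparse side the balanced choice of $\alpha$ in Theorem~\ref{thm:transbase} also yields $\O(mn^{(\omega+1)/4})$, since there $\frac{m^2}{n^{2\alpha-1}} + mn^\alpha$ is minimized at a value $\le m n^{(\omega+1)/4}$ precisely when $m \le n^{(3\omega-1)/4}$.

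Carrying this out, the key algebraic steps are: (i) from $m \le n^{(3\omega-1)/4}$, derive that the balancing exponent $\alpha = \tfrac13\log_n m + \tfrac13 \le \tfrac13\cdot\tfrac{3\omega-1}{4} + \tfrac13 = \tfrac{\omega+1}{4}$, hence $mn^\alpha \le mn^{(\omega+1)/4}$; (ii) check the first term is controlled identically, since at the balance point it equals $mn^\alpha$; (iii) for $m > n^{(3\omega-1)/4}$, use the $\O(n^\omega)$ matrix-multiplication algorithm and note $n^\omega = n^{(\omega+1)/4}\cdot n^{(3\omega-1)/4} < n^{(\omega+1)/4}\cdot m$. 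Finally, since $\alpha$ must lie in $[0,1]$ for Theorem~\ref{thm:transbase} to apply, I would verify that the chosen $\alpha$ stays in this interval (when $m$ is so small that the balancing $\alpha$ would be negative, one simply takes $\alpha = 0$ and the bound $m^2/n^{-1} + m = \O(mn)$ is even better; the upper end is fine since $(3\omega-1)/4 < 2$ gives $\alpha < 1$).

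The main obstacle I anticipate is purely bookkeeping: making the case split clean, handling the fact that $m$ is not known in advance (resolved by the $O(\log n)$-guesses trick, which costs only a polylogarithmic factor and is absorbed — or, more carefully, by noting that the transitive closure can be computed incrementally so that $m$ reveals itself), and checking that the ``balanced'' $\alpha$ genuinely stays in $[0,1]$ across the relevant density range so that Theorem~\ref{thm:transbase} is applicable. There is no deep new idea here beyond the interpolation between the combinatorial algorithm (good for sparse closures) and fast matrix multiplication (good for dense closures); the content is the observation that these two regimes meet exactly at the exponent $(\omega+1)/4$.
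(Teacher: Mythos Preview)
Your proposal is correct and follows essentially the same two-regime strategy as the paper: split at the threshold $m = n^{(3\omega-1)/4}$, use the $\O(n^\omega)$ matrix-multiplication transitive-closure algorithm in the dense regime, and invoke Theorem~\ref{thm:transbase} in the sparse regime; the handling of the unknown $m$ by running Algorithm~\ref{alg:trans} incrementally and switching once the output crosses the threshold is exactly what the paper does.

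The one place where the paper is cleaner is the sparse regime. You choose the \emph{balanced} $\alpha = \tfrac13\log_n m + \tfrac13$ (which depends on $m$ and hence forces you into the dyadic-guess discussion), and then argue that $\alpha \le \tfrac{\omega+1}{4}$ when $m \le n^{(3\omega-1)/4}$. The paper instead simply fixes $\alpha = \tfrac{\omega+1}{4}$ once and for all: then the second term of Theorem~\ref{thm:transbase} is $mn^{(\omega+1)/4}$ on the nose, and the first term is $m^2/n^{(\omega-1)/2} \le m\cdot n^{(3\omega-1)/4}/n^{(\omega-1)/2} = mn^{(\omega+1)/4}$ directly from $m \le n^{(3\omega-1)/4}$. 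No balancing, no dependence of $\alpha$ on $m$, and no dyadic guessing is needed for the choice of $\alpha$. (Incidentally, your side remark that $\alpha$ could go negative is moot: for $m\ge 1$ your balancing $\alpha$ is always $\ge \tfrac13$, and the computation ``$m^2/n^{-1}+m=\O(mn)$'' is off since $m^2/n^{-1}=m^2n$, but the case never arises.)
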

\begin{proof}
In \cite{Kozen1992}, Algorithm 5.2, it is provided an algorithm that computes the transitive closure in time $\O(n^\omega)$. We will use this algorithm if $m\geq n^{\frac{3\omega-1}{4}}$, and the time complexity is $\O(n^\omega) \leq \left(mn^{\frac{\omega+1}{4}}\right)$. Otherwise, that is, if $m<n^{\frac{3\omega-1}{4}}$, we will use Algorithm \ref{alg:trans}: applying Theorem \ref{thm:transbase} with $\alpha=\frac{\omega+1}{4}$, we obtain a time complexity of $\O\left(\frac{m^2}{n^{2\alpha-1}}+mn^\alpha\right)=\O\left(m\frac{n^{\frac{3\omega-1}{4}}}{n^{\frac{\omega-1}{2}}}+mn^{\frac{\omega+1}{4}}\right)=\O\left(mn^{\frac{\omega+1}{4}}\right)$.

However, the value of $m$ is not known in advance: in any case, it is enough to start by applying Algorithm \ref{alg:trans} until the number of edges reaches $n^\alpha$, and continue with matrix multiplication if this is the case.
\end{proof}

The following corollary applies the previous results to transitive graph recognition.

\begin{corollary}
It is possible to check if a graph is transitive in time $\O(\frac{m^2}{n^{2\alpha-1}}+mn^\alpha)$, and $\O(n^\frac{5}{3})$ if the graph is sparse.
\end{corollary}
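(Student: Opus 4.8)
The plan is to reduce transitivity testing to computing the transitive closure with Algorithm~\ref{alg:trans}, but to run that algorithm with an early-abort rule so that the edge count appearing in Theorem~\ref{thm:transbase} is the number $m$ of edges of the \emph{input} graph rather than the (possibly much larger) number of edges of the transitive closure. First I would record the elementary equivalences: a directed graph $G$ is transitive iff $G$ equals its transitive closure $G^+$, and since $G\subseteq G^+$ always, this holds iff $|E(G^+)|\le m$, where $m:=|E(G)|$. As preliminary bookkeeping I would contract strongly connected components in linear time, checking along the way the (linear-time checkable) conditions a transitive graph must satisfy within each SCC (it induces a complete digraph) and between adjacent SCCs (the edge block is complete); after this reduction we may assume $G$ is acyclic, as is already done in the proof of Theorem~\ref{thm:transbase}.

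Now run Algorithm~\ref{alg:trans} on $G$ while maintaining a counter of the closure edges discovered so far, and abort --- returning ``not transitive'' --- as soon as the counter exceeds $m$. If instead the algorithm terminates, then $|E(G^+)|\le m$, hence equals $m$, and we return ``transitive''. The reason the abort controls the running time is that at the moment we stop, the partial adjacency structure built by the algorithm has at most $m+(n-1)$ edges: the counter was at most $m$ before the last vertex was processed, and processing one vertex creates at most $n-1$ closure edges out of it. So the total work is, up to the harmless $m+n$ versus $m$ identification (which in particular costs nothing in the sparse regime), the work Algorithm~\ref{alg:trans} would perform on an instance whose transitive closure has $\O(m)$ edges, and Theorem~\ref{thm:transbase} yields the claimed $\O\!\left(\frac{m^2}{n^{2\alpha-1}}+mn^\alpha\right)$. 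Setting $\alpha=\frac{2}{3}$ then gives $\O(n^{5/3})$ for sparse $G$, matching the first corollary.

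The step I expect to be the main obstacle is making precise the ``graceful degradation'' invoked above: one needs that the running time Algorithm~\ref{alg:trans} has accumulated by the time its partial closure contains $k$ edges is $\O\!\left(\frac{k^2}{n^{2\alpha-1}}+kn^\alpha\right)$, not merely that the final running time has this form with $k=|E(G^+)|$. This should come straight out of the amortized argument proving Theorem~\ref{thm:transbase}, where every unit of work is charged to an edge of the current transitive closure; the only thing to verify is that the charging is monotone, i.e.\ no charge is ever placed on an edge before it is discovered, so that truncating the computation truncates the charge. A minor secondary point is that the condensation $G'$ used in the preprocessing has parameters $m'\le m$ and $n'\le n$ with no loss in the bound --- here one uses that a surviving SCC of size $s$ forces $\Omega(s^2)$ edges into $m$, whence $m\ge n^2/n'$ and the first term cannot get worse under the contraction.
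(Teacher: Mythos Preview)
Your approach is correct but more elaborate than needed. The paper's proof is a single sentence: stop Algorithm~\ref{alg:trans} as soon as \emph{one} new edge is added, rather than waiting until the edge counter exceeds $m$. The point of this tighter abort is that before any edge has been added, every computed $N'(\sigma_j)$ still equals the input $N(\sigma_j)$, so the cost analysis of Theorem~\ref{thm:transbase} applies verbatim with $m$ equal to the number of \emph{input} edges --- no graceful-degradation argument is required at all. If the graph is transitive, no edge is ever added and the algorithm finishes within the stated bound; otherwise the very first discrepancy certifies non-transitivity, and up to that moment we have spent at most the same bound.

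Your version does work too, and your identification of the main obstacle is accurate: with your later abort you genuinely need that the partial-run cost is controlled by the partial closure size $m'$. That holds, since both sums in the proof of Theorem~\ref{thm:transbase} are monotone in the set of processed vertices and in the current $N'$-degrees: the original edges processed so far number at most $m'$ (because $N\subseteq N'$ on processed vertices), and $|Y_\alpha|\le m'/n^\alpha$ for the same reason, so the two terms become $m' n^\alpha$ and $(m')^2/n^{2\alpha-1}$. Thus nothing is wrong; you have simply taken a longer route to the same destination. The SCC discussion and the contraction-parameter bookkeeping in your last paragraph are likewise sound but not needed for the corollary as stated, since the section already reduces to the acyclic case upfront.
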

\begin{proof}
It is easy to see that if the previous algorithms are stopped as soon as a single edge is added, the running time bounds depend on the input size and not on the output size.
\end{proof}

Finally, next corollary will apply the previous algorithms to the comparability graph recognition problem, providing again the first truly subquadratic algorithm for this task.

\begin{corollary}
It is possible to check if a graph is a comparability graph in time $\O\left(mn^{\frac{\omega+1}{4}}\right)$. If the graph is sparse, the running time is $\O\left(n^{\frac{5}{3}}\right)$
\end{corollary}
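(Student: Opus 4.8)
The plan is to reduce comparability graph recognition to a transitivity test on a directed graph, and then invoke the previous two corollaries. Recall the classical characterization: an undirected graph $G=(V,E)$ is a comparability graph if and only if its edge set admits a \emph{transitive orientation}, i.e.\ an orientation of each edge so that the resulting digraph is transitive. The key algorithmic fact I would use is the result of Habib, McConnell, Paul and Viennot (cited as \cite{Habib2000} in the introduction): given $G$, one can compute in linear (or near-linear) time a canonical orientation of $E$ — obtained from a modular decomposition together with a suitable vertex ordering such as an LexBFS-type sweep — with the property that this orientation is transitive if and only if $G$ is a comparability graph. In other words, \cite{Habib2000} turns the search over all exponentially many orientations into the construction of a single candidate orientation, computable well within $\O(n^{5/3})$ time.

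From here the steps are short. First I would run the algorithm of \cite{Habib2000} on $G$ to produce the candidate digraph $\vec{G}$; this costs $\O(n+m)$, negligible compared to the target bounds. Second, I would apply the transitive-graph-recognition corollary above to $\vec{G}$: checking whether $\vec{G}$ is transitive takes time $\O\!\left(\frac{m^2}{n^{2\alpha-1}}+mn^\alpha\right)$ for any $\alpha\in[0,1]$, and in particular $\O\!\left(mn^{\frac{\omega+1}{4}}\right)$ by choosing $\alpha=\frac{\omega+1}{4}$ (balancing exactly as in the proof of the transitive-closure corollary), and $\O\!\left(n^{5/3}\right)$ when $G$ is sparse by choosing $\alpha=\frac23$. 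Third, by the characterization, $G$ is a comparability graph if and only if this test succeeds, so we simply return the answer of the transitivity test. Note that one subtlety must be handled: the preliminary reduction in Section~\ref{sec:easy} assumed the digraph is acyclic, so before calling the transitivity routine I would verify that $\vec{G}$ is acyclic (linear time); if the canonical orientation contains a directed cycle, $G$ is immediately rejected, since a transitive digraph on a cycle would force parallel opposite edges, contradicting that $\vec{G}$ is an orientation of a simple graph.

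The main obstacle is not in the combinatorics of the reduction but in correctly citing and applying \cite{Habib2000}: one must be sure that the orientation it produces is genuinely a \emph{certifying} candidate — transitive exactly when $G$ is a comparability graph — rather than merely a heuristic, and that its running time is dominated by the $\O(n^{5/3})$ term. Granting that (which is exactly what \cite{Habib2000} provides), the rest is a black-box composition of already-proved results, and the stated time bounds follow immediately, since $m$ here denotes the number of edges of $G$, which equals the number of edges of $\vec{G}$.
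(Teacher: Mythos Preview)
Your proposal is correct and follows essentially the same route as the paper: compute a candidate orientation using the transitive-orientation algorithm of \cite{Habib2000} (the paper also cites the linear-time \cite{McConnellS1999}), then apply the transitivity test from the preceding corollaries, returning \emph{yes} iff the orientation is transitive. Your added acyclicity check is a reasonable explicit step that the paper handles implicitly via the blanket assumption at the start of the section.
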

\begin{proof}
It is known that, if a graph $G$ is a comparability graph, then it is possible to compute a transitive orientation of $G$ in linear time \cite{McConnellS1999} (for a simpler algorithm running in $\O(n+m\log n)$ see \cite{Habib2000}). With input a graph $G$, our comparability test runs the transitive orientation algorithm: let $H$ be the output of this algorithm. Then, we use the previous transitive closure algorithm (either the one for sparse graphs or the general one), in order to decide if $H$ is transitive. If it is, $G$ is clearly a comparability graph. If $H$ is not transitive, it means that the graph $G$ is not a comparability graph, since otherwise the transitive orientation algorithm would have provided a transitive orientation.

The running time of comparability graph recognition coincides with the running time of the transitive closure algorithm, since all other steps are faster. This proves the theorem.
\end{proof}

The rest of this section is devoted to the proof of Theorem \ref{thm:transbase}, which analyzes the complexity of Algorithm \ref{alg:trans}.

\begin{small}
\begin{lstlisting}[caption=Computing the transitive closure of a graph.,label={alg:trans}]
TransClosure($G=(V,N)$) {
	find a topological ordering $(\sigma_1,\dots,\sigma_n)$ of $V$;
	for (int i = n; i > 0; i--) {
		$N'(\sigma_i)$ = $N(\sigma_i)$;
		for w in $N_i(v)$ {
			$N'(\sigma_{i})$ = $N'(\sigma_i)$ $\cup$ $N'(w)$
		}
	}
	return ($V$, $N'$)
}
\end{lstlisting}
\end{small}

We will divide the proof into a sequence of lemmas. For completeness, we also include the correctness proof (Lemmas \ref{lem:transcond} and \ref{lem:cor}), already provided in \cite{Goralcikova1979}. First of all, let us restate the definition of transitive closed graph, so that it is easily checkable.

\begin{lemma}\label{lem:transcond}
A directed acyclic graph is transitive if and only if for each edge $(v,w)$, $N(v) \supseteq N(w)$.
\end{lemma}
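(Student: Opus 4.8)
The statement to prove is that a directed acyclic graph $G$ is transitive (i.e. equals its own transitive closure) if and only if for every edge $(v,w)$ we have $N(v) \supseteq N(w)$, where $N(\cdot)$ denotes the out-neighborhood. I would prove the two implications separately, and both directions are short.

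First I would handle the forward direction. Assume $G$ is transitive, and let $(v,w)$ be an edge. Take any $u \in N(w)$, so $(w,u)$ is an edge. Then $(v,w)$ and $(w,u)$ are both edges, and transitivity gives $(v,u)$ an edge, i.e. $u \in N(v)$. Hence $N(w) \subseteq N(v)$, which is exactly $N(v) \supseteq N(w)$.

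For the converse, assume that $N(v) \supseteq N(w)$ holds for every edge $(v,w)$; I want to show $G$ is transitive, i.e. whenever there is a directed path from $x$ to $z$ there is an edge $(x,z)$. The natural approach is induction on the length of the path. It suffices to treat paths of length two: if $(x,y)$ and $(y,z)$ are edges, then $z \in N(y)$, and since $(x,y)$ is an edge the hypothesis gives $N(x) \supseteq N(y)$, so $z \in N(x)$, i.e. $(x,z)$ is an edge. A path of length $k \ge 2$ can then be shortcut one step at a time: replace the first two edges $x \to y \to z$ by the single edge $x \to z$, obtaining a path of length $k-1$ from $x$ to the same endpoint, and iterate until a single edge remains. (Acyclicity is not even needed for this direction, but it guarantees the statement is about a well-defined finite object and matches the setting of the algorithm.)

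There is no real obstacle here; the only point to state carefully is that ``transitive'' for us means ``closed under taking the out-neighbor of an out-neighbor,'' equivalently ``a vertex reachable by a directed path of length $\ge 1$ is an out-neighbor,'' so that the length-two case genuinely implies the general one by the telescoping argument above. This reformulation of transitivity in terms of the local condition $N(v)\supseteq N(w)$ is precisely what makes the check in Algorithm \ref{alg:trans} efficiently implementable, which is why it is recorded as a lemma before the complexity analysis.
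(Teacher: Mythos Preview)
Your proof is correct and follows essentially the same approach as the paper: the forward direction is identical, and for the converse both you and the paper argue by induction on the length of a directed path, using the hypothesis $N(v)\supseteq N(w)$ on the first edge to shortcut the path. The only cosmetic difference is that the paper applies the inductive hypothesis to the tail of the path (from the second vertex to the endpoint) while you telescope by repeatedly collapsing the first two edges, but these are the same argument.
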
 
\begin{proof}
If the graph is transitive, and $x \in N(w)$, then there is the path $(v,w,x)$, and by transitivity $x \in N(v)$. For the other direction, let us assume there is a path from $v$ to $w$ and let us prove that $v$ is linked to $w$ by induction on the length $k$ of the path (the base case holds by hypothesis). For induction step, let $v_1$ be the first vertex of the path after $v$: $v_1$ is linked to $w$ by inductive hypothesis, and we conclude because $N(v) \supseteq N(v_1)$.
\end{proof}


%

\begin{lemma}\label{lem:cor}
The algorithm is correct, that is, the output graph is the transitive closure of the input graph.
\end{lemma}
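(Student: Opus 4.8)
The plan is to prove, by downward induction on $i$ (from $i=n$ down to $i=1$), the invariant that once the algorithm has finished the iteration processing $\sigma_i$, the set $N'(\sigma_i)$ equals the set of all vertices $x$ for which there is a directed path of length at least $1$ from $\sigma_i$ to $x$ in $G$; in other words, $N'(\sigma_i)$ is exactly the out-neighborhood of $\sigma_i$ in the transitive closure. Since the loop visits vertices in reverse topological order and never modifies $N'(\sigma_j)$ after iteration $j$, this invariant immediately gives the lemma: the returned graph $(V,N')$ is the transitive closure of $G$.

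For the base case $i=n$: because $(\sigma_1,\dots,\sigma_n)$ is a topological ordering, every edge of $G$ points from a smaller index to a larger one, so $\sigma_n$ has no out-neighbor. Hence $N(\sigma_n)=\emptyset$, the inner loop is vacuous, and $N'(\sigma_n)=\emptyset$, which is correct (the graph is acyclic, so $\sigma_n$ reaches nothing).

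For the inductive step, fix $i<n$ and assume the invariant for all $j>i$. Every $w\in N(\sigma_i)$ equals $\sigma_j$ for some $j>i$, so by the induction hypothesis $N'(w)$ already holds the transitive-closure out-neighborhood of $w$ when $\sigma_i$ is processed. After the inner loop, $N'(\sigma_i)=N(\sigma_i)\cup\bigcup_{w\in N(\sigma_i)}N'(w)$. The inclusion $\subseteq$ is clear: a vertex in $N(\sigma_i)$ is reached by a single edge, and a vertex in $N'(w)$ is reached by prepending the edge $(\sigma_i,w)$ to the path guaranteed by the induction hypothesis. For $\supseteq$, take a path $\sigma_i=v_0,v_1,\dots,v_k=x$ with $k\geq 1$: if $k=1$ then $x\in N(\sigma_i)$; if $k\geq 2$ then $v_1\in N(\sigma_i)$ and $v_1,\dots,v_k$ is a path of length $k-1\geq 1$, so $x\in N'(v_1)$ by the induction hypothesis. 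In either case $x\in N'(\sigma_i)$, closing the induction.

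The proof is essentially bookkeeping, and I do not expect a genuine obstacle; the one point that must be stated carefully is the ordering invariant — that at iteration $i$ every out-neighbor of $\sigma_i$ has already been finalized — which is precisely what the reverse topological scan provides, together with the acyclicity assumed at the start of the section (so that the relevant notion of transitive closure is the irreflexive one, $\sigma_i\notin N'(\sigma_i)$, consistent with Lemma \ref{lem:transcond}).
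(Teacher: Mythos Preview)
Your proof is correct and follows essentially the same approach as the paper: backward induction along the reverse topological order, using that every out-neighbor of $\sigma_i$ has already been finalized. The only cosmetic difference is that the paper phrases its invariant as ``the condition of Lemma~\ref{lem:transcond} holds for all $\sigma_j$ with $j\ge i$'' and then separately remarks that line~7 never adds unreachable vertices, whereas you collapse both directions into the single invariant $N'(\sigma_i)=\{\,x: \sigma_i \text{ reaches } x\,\}$; your packaging is slightly more direct but the underlying argument is the same.
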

\begin{proof}
We will prove by backward induction that after vertex $\sigma_i$ is analyzed in the \texttt{for} cycle at line 4, the condition in Lemma \ref{lem:transcond} is verified for each pair $(\sigma_j,\sigma_k)$ with $j \geq i$, for neighbors $N'$.

The base step is trivial, since $\sigma_n$ has no outgoing edge. For induction step, if $j>i$ there is nothing to prove, since the only modified neighbor in the \texttt{for} cycle is $N'(\sigma_i)$. For $j=i$, we observe that if there is a path from $(\sigma_i,\sigma_k,\dots,\sigma_j)$, then $\sigma_j \in N'(\sigma_k)$ by induction hypothesis, and $\sigma_j \in N'(\sigma_i)$ because of line 7 of the algorithm.

This means that the output graph is transitive. Moreover, the step in line 7 of the algorithm does not modify reachability, so the output graph is not bigger than the transitive closure.
\end{proof}

\begin{proof}[Proof of Theorem \ref{thm:transbase}]
We first observe that the steps in lines 2 and 5 can be performed in linear time, so they have no effect on the running time of the algorithm. The ``hardest'' step is line 7.

In order to estimate the time needed to perform line 7, we define two different sets and we compute separately the running time for each of these sets:

$$X_\alpha:=\{i:|N'(\sigma_i)|\leq n^\alpha\}$$
$$Y_\alpha:=X_\alpha^C=\{i:|N'(\sigma_i)|>n^\alpha\}.$$

The running time needed to perform the instruction in line 7 for an edge $(v,w)$ is $|N(w)|$, by using Fibonacci heaps \cite{Fredman1987}. The time needed to perform this operation for all edges $(v,w)$ with $w \in X_\alpha$ is:

$$\sum_{(v,w) \in E \wedge w \in X_\alpha} |N(w)| \leq mn^\alpha.$$

In order to estimate the time needed to perform the check for edges $(v,w)$ with $w \in Y_\alpha$, we observe that if $w \in Y_\alpha$, then $v \in Y_\alpha$ because $N'(v) \supseteq N'(w)$ and that $n^\alpha|Y_\alpha|\leq m$. Then, the time needed to check all these edges is at most:

$$\sum_{(v,w) \in E \wedge w \in Y_\alpha} |N(w)| \leq n|Y_\alpha|^2 \leq \frac{m^2}{n^{2\alpha - 1}}.$$

Then, for each $\alpha$, the total running time of the algorithm is at most $\O\left(\frac{m^2}{n^{2\alpha-1}}+mn^\alpha\right)$.

\end{proof}

\section{Hard Problems}\label{sec:hard}

In the previous section, we have provided truly subquadratic algorithms for two important problems. The goal of this section is the converse: proving that it is impossible to find truly subquadratic algorithms for some problems, unless SETH is false. We will provide the context under which all these reductions fall in, and in the last part we will prove them.

The starting point of our reductions is an ``artificial'' variation of \ksat\ which is quadratic-time solvable, but not solvable in $\O(n^{2-\epsilon})$ unless SETH is false.

\begin{prob}
{\ksatt.}{two sets of variables $\{x_i\}$, $\{y_j\}$ of the same size, a set $C$ of clauses over these variables, such that each clause has at most size $k$, the set of possible evaluations of $x_i$ and the set of possible evaluations of $\{y_j\}$.}{\textbf{True} if there is an evaluation of all variables that satisfies all clauses, \textbf{False} otherwise.}
\end{prob}

It should be noticed that this problem differs from the classic one only by the input size. This way, a quadratic-time algorithm exists (trying all possible evaluations). However, an algorithm running in $\O(n^{2-\epsilon})$ with $\epsilon$ not depending on $k$ would imply an algorithm solving \ksat\ in $\O(2^{\frac{n}{2}(2-\epsilon)}) = \O((2^{\frac{2-\epsilon}{2}})^n)$, and this is against SETH.

After defining the ``starting'' problem, we need to define reductions. We will use \pl\ reductions from one problem to another.

\begin{definition}
A \pl\ Karp reduction from a problem $\mathcal{P}$ to problem $\mathcal{Q}$ is a function $\Phi$ from instances of $\mathcal{P}$ to instances of $\mathcal{Q}$ verifying for every instance $I$ of $\mathcal{P}$:
\begin{itemize}
\item $\Phi(I)$ can be computed in time $\otilde(s(I))$, where $s(i)$ is the size of input $I$;\footnote{By $\otilde(f(n))$ we mean $\O(f(n)\log^k n)$ for some fixed $k$.}
\item $I$ and $\Phi(I)$ have the same output.
\end{itemize}
In general, if the output is not boolean, we will require a linear-time computable function that transforms the output of $\Phi(I)$ into the output of $I$. If $\mathcal{P}$ is reducible to $\mathcal{Q}$, we will say $\mathcal{P} \leql \mathcal{Q}$.
\end{definition}

\begin{figure}[h!t]

\tikzset{>=stealth'}
\begin{tikzpicture}[every node/.style={draw, rectangle, align=center,font=\footnotesize},label/.style={did, inner sep=1pt,midway,above,sloped,font=\tiny}]
\node (ksatt) at (1,0) {\ksatts};

\node (sdoms) at (3.8,2) {\sms\ \domss};
\node (stds) at (3.8,0) {\sms\ \tdss};
\node (stcov) at (3.8,-2.5) {\sms\ \tcovs};

\node (doms) at (7.6,3.4) {\domss};
\node (tds) at (7.6,2) {\tdss};
\node (betv) at (7.6,0.7) {\betvs};
\node (diam) at (7.6,-.9) {\diams};
\node (matzero) at (7.6,-2.75) {\matzeros};
\node (tcov) at (7.6,-4.5) {\tcovs};
\node (lalign) at (7.6,-6.5) {\laligns};

\node (bgdomv) at (11.6,5) {\bgdomvs};
\node (msimp) at (11.6,3.4) {\msimps};
\node (ortbv) at (11.6,2) {\ortbvs};
\node (clos) at (11.6,-0.2) {\closs};
\node (diamg) at (11.6,-1.4) {\diamgs};
\node (bgdt) at (11.6,-3) {\bgdts};
\node (bsdp) at (11.6,-5) {\bsdps};

\node (gdomv) at (15,5) {\gdomvs};
\node (hasse) at (15,3.4) {\hasses};
\node (hyper) at (15,-1.4) {\hypers};
\node (bet) at (15,0.7) {\bets};
\node (gdt) at (15,-3) {\gdts};

\draw[<-, thick] (ksatt) -- (stds) node[label] {Thm.~\ref{thm:stds}};

\draw[<->, thick] (sdoms) -- (stds) node[label, left, sloped=false] {Thm.~\ref{thm:sdoms}-\ref{thm:sdomsi}};
\draw[<->, thick] (stds) -- (stcov) node[label, left, sloped=false] {Rem.~\ref{rem:tcov}};

\draw[->, thick] (doms) -- (sdoms) node[label] {Rem.~\ref{rem:sm}};
\draw[->, thick] (tds) -- (stds) node[label] {Rem.~\ref{rem:sm}};
\draw[->, thick] (diam) -- (stds) node[label] {Thm.~\ref{thm:diam}};
\draw[->, thick] (matzero) -- (stds) node[label] {Thm.~\ref{thm:matzero}};
\draw[->, thick] (tcov) -- (stcov) node[label] {Rem.~\ref{rem:sm}};
\draw[->, thick] (lalign) -- (stcov) node[label] {Thm.~\ref{thm:lalign}};
\draw[->, thick] (betv.195) -- (stds) node[label] {Thm.~\ref{thm:betv}};
\draw[->, thick] (clos) -- (stds) node[label,very near start,above] {Thm.~\ref{thm:clos}};

\draw[<->, thick] (ortbv) -- (tds) node[label] {Rem.~\ref{rem:ortbv}};

\draw[<->, thick] (bgdomv) -- (doms) node[label] {Rem.~\ref{rem:bgdomv}-\ref{rem:bgdomvi}};
\draw[->, thick] (msimp) -- (doms) node[label] {Rem.~\ref{rem:msimp}};
\draw[->, thick] (bgdt) -- (tcov) node[label] {Thm.~\ref{thm:bgdt}};
\draw[->, thick] (bsdp) -- (tcov) node[label] {Thm.~\ref{thm:bgdt}};
\draw[->, thick] (diamg) -- (diam) node[label] {Rem.~\ref{rem:sm}};

\draw[<->, thick] (gdomv) -- (bgdomv) node[label] {Rem.~\ref{rem:sm}-\ref{rem:gdomvi}};
\draw[->, thick] (hasse) -- (msimp) node[label] {Rem.~\ref{rem:hasse}};
\draw[->, thick] (bet) -- (betv) node[label] {Rem.~\ref{rem:bet}};
\draw[->, thick] (gdt) -- (bgdt) node[label] {Rem.~\ref{rem:sm}};
\draw[->, thick] (hyper) -- (diamg) node[label] {Thm.~\ref{thm:hyper}};
\end{tikzpicture}
\caption{A scheme of the reductions provided by this article.} \label{fig:scheme}
\end{figure}
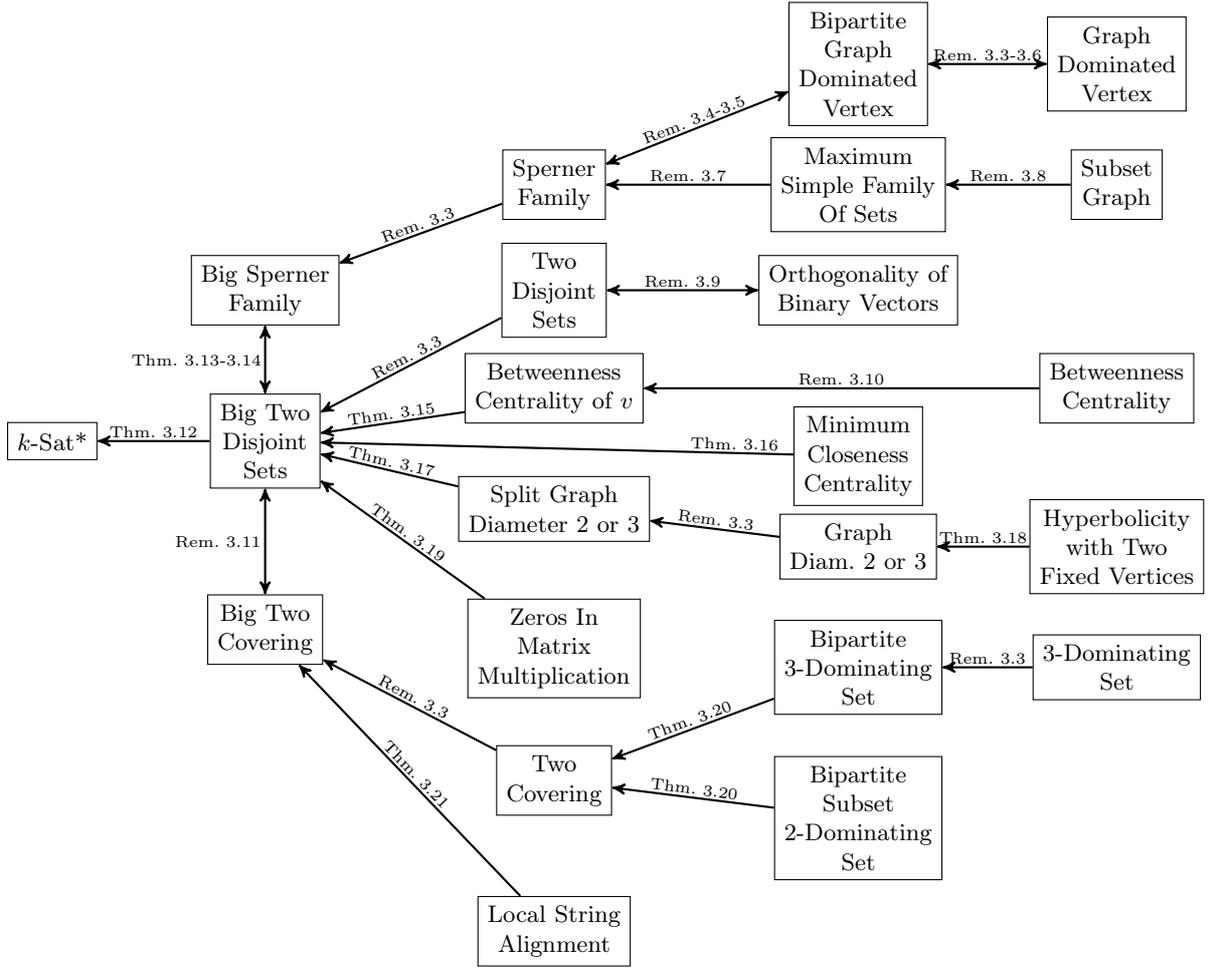

\begin{remark}
If $\mathcal{P} \leql \mathcal{Q}$ and there is an algorithm solving $\mathcal{Q}$ in time $\otilde(n^{2-\epsilon})$ for some $\epsilon$, then $\mathcal{P}$ can be solved in time $\otilde(n^{2-\epsilon})$. 
\end{remark}

The reductions that we will provide are summarized in Figure \ref{fig:scheme} (the definition of all problems is given in Appendix \ref{app:prob}). The previous remark implies that an $\O(n^{2-\epsilon})$ algorithm for any of these problems would falsify SETH. All proofs of those reductions will be provided by the next subsection.

\subsection{Proof of Reductions}

In this subsection, we will prove all the reductions in Figure \ref{fig:scheme}. We have divided these reductions in remarks and theorem, depending on how intricate those constructions are. We will start by providing the proof of remarks, in the order in which they appear in Figure \ref{fig:scheme}.

\begin{remark} \label{rem:sm}
For each problem, $\mathcal{P}$, \sm$\mathcal{P} \leql \mathcal{P}$, since the instances of \sm$\mathcal{P}$ are a subset of the instances of $\mathcal{P}$ and the required output is the same (the function $\Phi$ in the definition of reductions is the identity). The same argument proves that $\bgdomv \leql \gdomv$, that $\bgdt \leql \gdt$ and that $\diam \leql \diamg$.
\end{remark}

\begin{remark}\label{rem:bgdomv}
$\hdome \leql \bgdomv$: each hypergraph $(V,E)$ can be transformed to a biparted graph $(V,W,A)$ where $W=E$ and $A(v,e)$ holds if $v \in e$ (hypergraph edges are seen as sets of vertices). It is clear that an edge $e$ dominates $e'$ in the hypergraph if and only if the corresponding element in $W$ does. The only problem is that two vertices $v,v' \in V$ might also dominate each other: to this purpose, we add to $W$ another copy of $V$ named $V'$ and we add to $E$ all edges $(v,v)$ from $V$ to $V'$.
\end{remark}

\begin{remark}\label{rem:bgdomvi}
$\bgdomv \leql \hdome$: starting from the bipartite graph $(V,V',A)$, create two hypergraphs $(V,E)$ and $(V',E')$. The edges are defined as follows: for each $v \in V$ we add $N(v)$ to $E'$ (since $N(v)$ is contained in $V'$), and likewise for each $v' \in V'$ we add $N(v')$ to $E$. This way, if an edge $v \in V$ dominates an edge $w \in V$, the corresponding edges in $V'$ in the second hypergraph dominate each other. Conversely, if the dominated vertex is in $V'$, there is a dominated edge in the first hypergraph. In order to conclude the reduction, it is enough to consider the disjoint union of $(V,E)$ and $(V',E')$.
\end{remark}

\begin{remark}\label{rem:gdomvi}
$\gdomv \leql \bgdomv$: given a graph $G=(V,E)$, let us construct a bipartite graph $G'=(V,V',A)$ where $V$ and $V'$ are copies of $V$ and $A(v,v')$ holds if $v \in V$, $v' \in V'$ and $E(v,v')$ holds in $G$. It is clear that this construction preserves domination.
\end{remark}


\begin{remark}\label{rem:msimp}
$\doms \leql \msimp$, since a family $\C$ of sets has a dominated set if and only if it is not simple, if and only if the maximum simple family of sets is $\C$.
\end{remark}

\begin{remark}\label{rem:hasse}
$\msimp \leql \hasse$, since the maximum simple family of sets is the set of sources of the subset graph.
\end{remark}

\begin{remark}\label{rem:ortbv}
$\ortbv \eql \tds$: it is enough to choose an ordering on $X$ and to code a set $C \subseteq X$ as a vector of length $|X|$ having $1$ at place $x$ if $x \in C$, $0$ otherwise. Two such vectors are orthogonal if and only if the set of ones in these two vectors are disjoint (we are assuming that the ring over which the vector are considered has characteristic 0, that is, in that ring, $1+1+\dots+1$ is never $0$). We are also assuming a ``clever'' data structure to store binary vectors, that is, no space is needed to memorize zeros.
\end{remark}

\begin{remark}\label{rem:bet}
$\betv \leql \bet$, because computing the betweenness centrality of all vertices is of course harder than computing the betweenness centrality of a single vertex.
\end{remark}

\begin{remark}\label{rem:tcov}
$\sm\tds \eql \sm\tcov$, because $C,C' \in \C$ are disjoint if and only if $C^C,C'^C$ cover $X$: it is enough to define $\Phi(X,\C)=(X,\bar{\C})$, where $\bar{\C} = \{C^C:C \in \C\}$. Note that the input size in the reduction can increase only by a logarithmic factor, since the size of $X$ is $\O(\log^k(|\C|))$.
\end{remark}

We may now turn to the proof of the theorems. Again, the theorems are sorted according to Figure \ref{fig:scheme}.

\begin{theorem}\label{thm:stds}
For each $k$, $\ksatt \leql \sm \tds$.
\end{theorem}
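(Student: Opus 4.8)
\textbf{Proof proposal for Theorem \ref{thm:stds}.}

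The plan is to encode a \ksatt\ instance as a \sm\tds\ instance, where the ground set $X$ corresponds to the clauses and the collection $\C$ contains, for each partial evaluation, the set of clauses it \emph{fails} to satisfy. Concretely, let the \ksatt\ instance have clause set $C = \{c_1,\dots,c_t\}$, variable blocks $\{x_i\}$ and $\{y_j\}$, and admissible evaluation sets $E_x$ (for the $x$-variables) and $E_y$ (for the $y$-variables). I would take $X = C$ as the ground set. For each admissible evaluation $\phi \in E_x$ I put into a first subcollection $\C_x$ the set $A_\phi := \{c \in C : \phi \text{ does not satisfy } c\}$, i.e.\ the clauses all of whose literals involving $x$-variables are falsified by $\phi$ (and which therefore still need a $y$-literal to be satisfied). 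Symmetrically, for each $\psi \in E_y$ I put into $\C_y$ the set $B_\psi := \{c \in C : \psi \text{ does not satisfy } c\}$. Then $\C := \C_x \cup \C_y$, and $\Phi$ maps the \ksatt\ instance to $(X,\C)$.

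The key correctness claim is: there is an evaluation of all variables satisfying every clause if and only if there exist $\phi \in E_x$, $\psi \in E_y$ with $A_\phi \cap B_\psi = \emptyset$. Indeed, the joint evaluation $(\phi,\psi)$ satisfies clause $c$ iff $\phi$ satisfies $c$ \emph{or} $\psi$ satisfies $c$, i.e.\ iff $c \notin A_\phi$ or $c \notin B_\psi$, i.e.\ iff $c \notin A_\phi \cap B_\psi$; so $(\phi,\psi)$ satisfies \emph{all} clauses iff $A_\phi \cap B_\psi = \emptyset$. One subtlety is that \tds\ asks for \emph{two} disjoint sets in $\C$ without prescribing which comes from which side; I would handle this either by observing that the reduction only needs soundness in this direction and a pair inside $\C_x$ (or inside $\C_y$) alone is harmless if the problem definition already restricts to the cross case, or — to be safe — by tagging: add two fresh ground-set elements $p,q$ to $X$, put $p$ into every set of $\C_x$ and $q$ into every set of $\C_y$, so that any disjoint pair must have one set from each side while cross pairs are unaffected.

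Next I would check the size and time constraints required by a \pl\ reduction. The size of a \ksatt\ instance is dominated by $|E_x| + |E_y|$ (plus the clause description, which is $\O(t\cdot k)$ and, since each clause has size $\le k$ over $2n$ variables, polynomial in $n$ while $|E_x|,|E_y|$ may be as large as $2^n$). The output collection $\C$ has $|E_x| + |E_y|$ sets, each a subset of the $t$-element ground set $X$, so $|\C|$ matches the input cardinality up to the $\sm$ promise that $|\C|$ is exponential in $|X|$ — here $|X| = t + 2$ and, because the relevant hard instances of \ksatt\ have $t = \Theta(n)$ clauses while $|\C| = 2^{\Theta(n)}$, the promise $|\C| \le 2^{|X|}$-type bound holds (this is exactly why the artificial \ksatt\ input padding was introduced, and I would cite that discussion). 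Building each $A_\phi$ takes $\O(t\cdot k)$ time by scanning the clauses against $\phi$, so $\Phi$ runs in $\otilde$ of the input size. Finally, the outputs are boolean and correspond directly, so no output-translation function is needed beyond the identity.

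\textbf{Main obstacle.} The genuinely delicate point is not the logical encoding — that is a one-line observation — but making the \emph{sizes} line up with the definition of a \pl\ reduction and with the \sm\ promise simultaneously: the ground set $X$ must be small (logarithmic in $|\C|$) so that the \sm\tds\ instance is a legitimate ``\sm'' instance, which forces $|X| = \O(\log|\C|)$, i.e.\ the clause count $t$ must be $\O(\log(|E_x|+|E_y|))$. I expect the proof to spend most of its effort verifying that the \ksatt\ instances arising from SETH (with $n$ original variables split into two halves, $2^{n/2}$ evaluations per side, and $\O(n)$ clauses from a $k$-CNF) indeed satisfy $t = \O(\log|\C|)$, and in reconciling the tagging trick with the precise statement of \tds\ in the appendix. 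Everything else is routine.
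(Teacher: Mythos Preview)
Your construction is exactly the paper's: ground set $=$ clauses plus two tags, and for each partial evaluation record the set of clauses it fails to satisfy, with the tags forcing any disjoint pair to be a cross pair. One small correction to your size analysis: a $k$-CNF on $n$ variables can have $\O(n^k)$ clauses, not $\Theta(n)$, so you only get $|X|=\O(\log^k|\C|)$ rather than $\O(\log|\C|)$ --- but this is precisely what the \sm\ promise allows, so the reduction goes through without the extra verification you anticipate in your ``Main obstacle'' paragraph.
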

\begin{proof}
Let $X_1$ be the set of the possible evaluations of $\{x_i\}$, $X_2$ the set of possible evaluations of $\{y_j\}$, $C$ the set of clauses of an instance $I$ of \ksatt. We define $\Phi(I)=(X,\C)$, where $X=C \cup \{x_1,x_2\}$, and $\C$ is the collection made by sets of clauses not satisfied by an assignment in $X_1$ or $X_2$ (and $x_1,x_2$ are used to distinguish between assignments in $X_1$ and $X_2$).

More formally, $\C=\C_1 \cup \C_2$, $\C_1:=\{\{x_1\}\cup \{c \in C: x \not\models c\}: x \in X_1\}$ and similarly $\C_2:=\{\{x_2\}\cup \{c \in C: x \not\models c\}: x \in X_2\}$. This way, the size of $\Phi(I)$ is linear in the size of $I$. Moreover, it is possible to compute $\Phi$ by analyzing all sets of evaluation of variables one by one, and for each of them check which clauses are verified. For each evaluation in $X_1$ or $X_2$, the checking time is proportional to the number of clauses, which is at most $\O(\log^k(n))$, where $n$ is the input size of \ksatt.

It remains to prove that the output of the problem is preserved: we will prove that there is a bijection between the pairs of disjoint sets and the satisfying assignments of the formula of \ksat*.

In particular, two sets that are both in $\C_1$ or both in $\C_2$ cannot be disjoint because of $x_1$ and $x_2$. As a consequence, two disjoint sets correspond to an evaluation of all variables: the evaluation satisfies $\phi$ if and only if for each clause there is a variable contained in the evaluation if and only if there is no clause contained in both sets.

\end{proof}

\begin{theorem}\label{thm:sdoms}
$\sm \tds \leql \sm \doms$.
\end{theorem}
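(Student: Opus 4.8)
The plan is to associate to each set $C\in\C$ \emph{two} sets in the new collection: a ``lower'' set $L_C$ that behaves like $C$ and an ``upper'' set $R_C$ that behaves like $X\setminus C$. The point is that $C\cap C'=\emptyset$ is equivalent to $C\subseteq X\setminus C'$, so if the encoding is done right, $L_C\subseteq R_{C'}$ will hold exactly when $C$ and $C'$ are disjoint; thus a dominated set in the new collection corresponds to a disjoint pair in $\C$, and conversely. A single set per element cannot work, because disjointness is symmetric while set containment is not, so the lower/upper split is unavoidable.

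The difficulty is that $\C$ may contain nested sets, so the naive map $C\mapsto C$, $C\mapsto X\setminus C$ produces spurious containments $L_C\subseteq L_{C'}$ (whenever $C\subseteq C'$), $R_C\subseteq R_{C'}$ (whenever $C'\subseteq C$), and $R_C\subseteq L_{C'}$ (whenever $X\setminus C\subseteq C'$), any of which would create a false positive; and we cannot first filter $\C$ down to an antichain, since that is essentially as hard as the target problem. To kill these, I would add only $\O(\log|\C|)$ fresh ground elements carrying ``index gadgets''. Fix an enumeration $C_1,\dots,C_N$ of $\C$, write each index in balanced binary over $\ell=\lceil\log_2 N\rceil$ bits using element pairs $\{a_j,b_j\}_{j\le\ell}$ for the lower sets and $\{c_j,d_j\}_{j\le\ell}$ for the upper sets, and let $\alpha_i$ (resp.\ $\beta_i$) be the resulting balanced code, i.e.\ the set picking exactly one of each pair according to the bits of $i$. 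All these codes have size $\ell$, so distinct ones are pairwise incomparable. Set $A:=\{a_1,b_1,\dots,a_\ell,b_\ell\}$ and define $$L_{C_i}:=C_i\cup\alpha_i,\qquad R_{C_i}:=(X\setminus C_i)\cup\beta_i\cup A,$$ with new ground set $X':=X\cup\{a_j,b_j,c_j,d_j\}_{j\le\ell}$ and new collection $\C':=\{L_{C_i}\}_i\cup\{R_{C_i}\}_i$. Since $|X'|=|X|+\O(\log|\C|)$ and $|\C'|=2|\C|$, this maps \sm-instances to \sm-instances and $\Phi$ is clearly quasilinear (each $L_{C_i},R_{C_i}$ is built in $\otilde(1)$ time).

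It then remains to check the possible containments among distinct members of $\C'$. Between two lower sets, or between two upper sets, the $\alpha$- (resp.\ $\beta$-) gadgets are incomparable, so no containment occurs. An upper set contains all of $A$, while a lower set meets $A$ only in its size-$\ell$ code $\alpha_i\subsetneq A$, so no upper set is contained in a lower set (and in particular $L_{C_i}\neq R_{C_j}$ always, so any containment found is proper). Finally, for a lower set inside an upper set, the gadget part is automatic, $\alpha_i\subseteq A\subseteq R_{C_j}$, and the $c,d$-elements occur only on the upper side, so $L_{C_i}\subseteq R_{C_j}$ iff $C_i\subseteq X\setminus C_j$ iff $C_i\cap C_j=\emptyset$. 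Hence $\C'$ contains a dominated set iff $\C$ contains two disjoint sets. The only loose ends are degenerate inputs, handled in linear time before applying $\Phi$: if $|\C|\le 1$ the answer is \textbf{False}, and if $\emptyset\in\C$ with $|\C|\ge 2$ it is \textbf{True}; otherwise we may assume $|\C|\ge 2$ and $\emptyset\notin\C$, so the case $i=j$ yields no containment and the equivalence above is exactly what is needed. The genuine obstacle throughout is getting the asymmetric gadget right — blocking precisely the unwanted same-side and reverse containments while leaving the disjointness test untouched, i.e.\ arranging that lower codes always sit inside every upper set but upper sets never sit inside a lower set.
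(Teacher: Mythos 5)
Your proposal is correct and follows essentially the same route as the paper: form the union of the original collection with the collection of complements, then add $\O(\log|\C|)$ fresh ground elements carrying balanced binary index codes (with the full code alphabet of the lower side included in every upper set) so as to block all same-side and reverse containments, leaving $L_{C_i}\subseteq R_{C_j}$ as the only possible containment, which witnesses $C_i\cap C_j=\emptyset$. Your write-up is in fact somewhat more explicit than the paper's about the degenerate cases ($\emptyset\in\C$, $i=j$, $|\C|\le 1$) and about why the gadgets work.
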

\begin{proof}
Consider an instance $I=(X,\mathcal{C})$ of $\textsc{\sm TwoDisjointSets}$. First of all, we define $\Phi'(I)=(X,\mathcal{C'})$, where $\C' = \C \cup \bar{\C}$, and $\bar{\C}:=\{C^C:C \in \C\}$ (which is not the correct definition, but we will see how to adapt it).

If we find two sets $C \in \C, C' \in \bar{\C}$ such that $C \subseteq C'$, we know that $C$ and $C'^C$ are disjoint and in $\C$, so we have found a solution. However, we might also find two sets $C \subseteq C', C' \in \C$, $C \subseteq C'$ with $C \in \bar{\C}$ and $C' \in \C$, $C \subseteq C', C' \in \bar{\C}$. The remaining part of the proof slightly modifies the set $X$ and the collection $\C'$ in order to avoid such cases.

The first problem can be solved by defining $k:=\lceil \log_2(|\C|)\rceil$, and adding two sets $Y=\{y_1,\dots,y_k\}$ and $Z=\{z_1,\dots,z_k\}$ to $X$. In particular, we add $Y$ and $Z$ to each set in $\bar{\C}$ and we add to each element $C \in {\C}$ some $y_i$ and some $z_j$, so that no element of $\C$ can dominate another element in $\C$ (for example, we may associate each set $C$ with a unique binary number with $k$ bits, and code this number using $y_i$ as zeros and $z_j$ as ones). In order to solve the second problem, it is enough to make the same construction adding new sets $Y'$ and $Z'$ of logarithmic size, and use them to uniquely code any element in $\bar{\C}$. None of the elements in $Y'$ and $Z'$ is added to elements in $\C$, and this also solves the third problem.
\end{proof}

\begin{theorem}\label{thm:sdomsi}
$\sm \doms \leql \sm \tds$.
\end{theorem}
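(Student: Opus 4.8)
The plan is to exploit the elementary identity $C \subseteq C'$ iff $C \cap (X\setminus C') = \emptyset$, so that dominated pairs of $\C$ turn into disjoint pairs once we also throw in all the complements. Starting from an instance $I=(X,\C)$ of $\sm\doms$ (decide whether $\C$ contains two sets $C\subsetneq C'$), I would build $\Phi(I)=(X',\C')$, where $X'$ is $X$ enlarged by a handful of fresh elements and $\C'$ consists of two ``tagged'' copies of the collection: a set $f(C)$ built from $C$ itself, and a set $g(C)$ built from the complement $X\setminus C$, one of each for every $C\in\C$. Exactly as in the proof of Theorem~\ref{thm:sdoms}, the obstacle is that the naive collection $\C\cup\{C^C:C\in\C\}$ carries several kinds of spurious disjoint pairs, and the entire construction is designed to destroy these pairs using tags small enough that $\Phi(I)$ remains a \emph{Big} instance.

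Concretely, I would first add one fresh element $a$ to every $f(C)$ and one fresh element $b$ to every $g(C)$. This alone kills every disjoint pair lying inside $\{f(C):C\in\C\}$ and every one lying inside $\{g(C):C\in\C\}$, so that the only remaining candidate disjoint pairs are the ``cross'' pairs $\{f(C),g(D)\}$, whose intersection (up to the irrelevant tags $a,b$) is $C\setminus D$. The last obstacle, which I expect to be the crux, is that $f(C)$ and $g(C)$ are \emph{always} disjoint (because $C\cap(X\setminus C)=\emptyset$), which would make every instance a trivial \textbf{True}. The trick I would use here is a \emph{size tag}: introduce fresh elements $s_0,\dots,s_{|X|}$ and put $s_{|C|}$ into both $f(C)$ and $g(C)$. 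A genuine dominated pair $C\subsetneq D$ satisfies $|C|<|D|$, so its size tags never coincide and the real condition $C\setminus D=\emptyset$ is left untouched; on the other hand $f(C)$ and $g(C)$ now share $s_{|C|}$, so the reflexive pairs disappear. One then checks that $f(C)\cap g(D)=\emptyset$ holds precisely when $C\subseteq D$ and $|C|\neq|D|$, i.e.\ precisely when $C\subsetneq D$. Hence $\C'$ contains two disjoint sets if and only if $\C$ contains a dominated set, and a dominated pair is recovered from a disjoint pair in linear time simply by stripping the tags.

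It remains to observe that $\Phi$ respects the size constraints of a \pl\ Karp reduction. The key fact is the one already used in Remark~\ref{rem:tcov}: in the \emph{Big} regime $|X|$ is polylogarithmic in $|\C|$, so the extra elements $a$, $b$ together with the $|X|+1$ size markers enlarge the ground set only by a polylogarithmic amount; in particular $|X'|$ stays polylogarithmic in $|\C'|=2|\C|$, so $\Phi(I)$ is again a \emph{Big} instance, now of $\tds$. Each $f(C)$ and each $g(C)$ has size at most $|X|+2$, so $\Phi(I)$ is written down in time $\otilde(s(I))$. Therefore $\Phi$ is a valid \pl\ Karp reduction and $\sm\doms\leql\sm\tds$.
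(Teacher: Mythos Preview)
Your reduction follows the same core idea as the paper's: exploit $C\subseteq C'\iff C\cap (C')^C=\emptyset$, build a collection containing both the original sets and their complements, and add two fresh tag elements (your $a,b$; the paper's $x_1,x_2$) so that no disjoint pair can lie entirely inside either half. The paper stops there, taking $\C'=\{C\cup\{x_1\}:C\in\C\}\cup\{C^C\cup\{x_2\}:C\in\C\}$ and arguing that a disjoint pair in $\C'$ corresponds to an inclusion in~$\C$.

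Your version is more careful on one point the paper glosses over: the reflexive pair. In the paper's construction, for every $C\in\C$ the sets $C\cup\{x_1\}$ and $C^C\cup\{x_2\}$ are distinct and disjoint, so the reduced instance is \textbf{True} regardless of whether $\C$ is a Sperner family. Your size-tag device---placing $s_{|C|}$ in both $f(C)$ and $g(C)$---kills exactly these spurious pairs, since a genuine proper inclusion $C\subsetneq D$ forces $|C|<|D|$ and hence distinct tags, while $f(C)\cap g(C)\ni s_{|C|}$. The extra $|X|+1$ ground elements are polylogarithmic in the \emph{Big} regime, so the instance remains a valid $\sm\tds$ instance and the reduction is quasilinear. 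In short, your approach is essentially the paper's, but your handling of the $C=D$ case is the correct one.
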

\begin{proof}
Let us consider an instance $I=(X,\C)$ of $\textsc{\sm DominatedSet}$ and let us define $\Phi(I)=(X \cup{x_1,x_2},\mathcal{C'})$, where $\C':=\C_1 \cup \C_2$, $\C_1:=\{C \cup {x_1}:C \in \C\}$ and $\C_2:= \{C \cup {x_2}:C \in \bar{\C}\}$.

If $C_1$ and $C_2$ are disjoint sets in $\C'$, one of them must be in $\C_1$ and the other in $\C_2$ (because of $x_1$ and $x_2$). As a consequence, there are two disjoint sets $C_1 \in \C_1,C_2 \in \C_2$ in $\Phi(I)$ if and only if $C_1 \cap X \in \C$, $C_2^C \cap X \in \C$ are disjoint if and only if $C_1 \cap X \subseteq C_2 \cap X$ (and this means that $\C$ contains a set dominating another).
\end{proof}

\begin{theorem}\label{thm:betv}
$\sm \tds \leql \betv$.
\end{theorem}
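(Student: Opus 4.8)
The plan is to encode an instance $(X,\C)$ of \sm\ \tds\ as a graph $G$ together with a distinguished vertex $v$, in such a way that the betweenness centrality of $v$ in $G$ detects whether $\C$ contains two disjoint sets. The natural construction is to build a (roughly) four-layer graph: a layer $A$ with one vertex $a_C$ for each set $C\in\C$, a layer $B$ with one vertex $b_x$ for each ground-set element $x\in X$, a mirror layer $A'$ with one vertex $a'_{C'}$ for each $C'\in\C$, and the special vertex $v$ sitting so that all shortest paths we care about must pass through it. I would connect $a_C$ to $b_x$ exactly when $x\in C$, and similarly connect $b_x$ to $a'_{C'}$ exactly when $x\in C'$; I would then route things through $v$ by attaching $v$ to every vertex of $A$ and of $A'$ (or, more carefully, insert $v$ between the two ``copies'' so that the only way to get from the $A$-side to the $A'$-side quickly is through $v$). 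The key point is the standard betweenness trick: for a fixed pair of endpoints $a_C$ and $a'_{C'}$, the number of length-2 paths $a_C - b_x - a'_{C'}$ equals $|C\cap C'|$; so $C$ and $C'$ are disjoint precisely when there is \emph{no} such length-2 path, which forces the shortest $a_C$–$a'_{C'}$ path to have length $\ge 3$ and to pass through $v$, contributing to $v$'s betweenness. By contrast, if $C\cap C'\neq\emptyset$ there is a length-2 path avoiding $v$, so $v$ gets no contribution from this pair (or a contribution that is strictly smaller and predictable).

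Concretely, I would engineer distances so that: (i) $d(a_C,a'_{C'})=2$ iff $C\cap C'\neq\emptyset$, via the common neighbour $b_x$; (ii) whenever $C\cap C'=\emptyset$, $d(a_C,a'_{C'})=3$ and \emph{every} shortest path goes $a_C - v - ? - a'_{C'}$, so $v$ lies on all of them, i.e. the pair contributes its full $\delta_{a_C a'_{C'}}(v)=1$ (or a known constant) to $\mathrm{BC}(v)$. To make (ii) clean I would give $v$ its own private ``bridge'' to the $A'$-side so that the $a_C - v - (\text{bridge}) - a'_{C'}$ path is the unique shortest detour and no shortest path between $A$-side and $A'$-side vertices avoids $v$; I should also make sure no pair of $A$-vertices (or pair of $B$-vertices, etc.) accidentally creates shortest paths through $v$ that depend on the instance — these ``spurious'' contributions must be either zero or a fixed, instance-independent constant. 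Then $\mathrm{BC}(v)$ equals (spurious constant) plus (number of disjoint pairs $(C,C')$), so comparing $\mathrm{BC}(v)$ against the threshold decides \sm\ \tds; dividing out symmetry (ordered vs.\ unordered pairs, and the diagonal $C=C'$) is a routine bookkeeping adjustment. The reduction is clearly \pl: $|X|=\O(\log^k|\C|)$, so $G$ has $\O(|\C|)$ vertices and $\otilde(|\C|)$ edges, computable in $\otilde(|\C|)$ time, and the output transformation (threshold comparison) is trivial.

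The main obstacle I anticipate is \textbf{controlling all shortest-path contributions to $\mathrm{BC}(v)$ that do not come from the intended $A$–$A'$ pairs}: betweenness centrality sums $\sigma_{st}(v)/\sigma_{st}$ over \emph{all} pairs $s,t$, so I must verify that pairs like $(a_C,b_x)$, $(b_x,b_{x'})$, $(a_C,a_{C''})$, $(a'_{C'},a'_{C''})$ either never have $v$ on a shortest path, or contribute a total that does not depend on which sets are disjoint. The cleanest way to force this is to keep $v$ ``far'' from $B$ and to ensure that within each side shortest paths stay inside that side (e.g.\ by adding a few auxiliary vertices giving short in-side connections), so that $v$ is only ever used to cross between the $A$-side and the $A'$-side. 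A secondary but real issue is that $\sigma_{st}$ can be larger than $1$, so I want the \emph{unique}-shortest-path property for the decisive pairs; padding $v$'s bridge with a short unique path (a path gadget of length $1$ or $2$ with no alternatives) handles this. Once the graph is set up so that $\mathrm{BC}(v)$ is an affine function of the number of disjoint pairs with known coefficients, the correctness proof is the same disjoint-pair $\leftrightarrow$ distance-$3$ bijection argument used above, and the complexity bookkeeping is immediate.
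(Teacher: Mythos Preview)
Your construction is essentially the paper's: the paper builds a five-part ``cycle'' $y,\ x,\ \C_x,\ X,\ \C_y$ in which $x$ plays the role of your special vertex $v$, the auxiliary vertex $y$ is precisely your ``private bridge'' to the $A'$-side, and $(\C_x,X,\C_y)=(A,B,A')$ with the same membership edges, so that $d(C,C')=2$ iff $C\cap C'\neq\emptyset$ and otherwise the only short route is $C\text{--}x\text{--}y\text{--}C'$. The paper disposes of your ``spurious contributions'' worry by arguing that the only endpoint pairs whose shortest paths pass through $x$ are $y$--$\C_x$ (a fixed contribution of $|\C|$) and $\C_x$--$\C_y$ (nonzero iff a disjoint pair exists), giving the threshold $|\C|$; your instinct that the within-$A$ pairs need attention is the right place to be careful, and is exactly where the paper's argument is most terse.
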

\begin{proof}
Let us consider an instance $I=(X,\C)$ of $\sm \tds$, and let us construct a graph $G=(V,E)$ as follows:
\begin{itemize}
\item $V:=\{y\} \cup \{x\} \cup \C_x \cup X \cup \C_y$, where $X$ is the ground set of $I$ and $\C_x,\C_y$ are two identical copies of $\C$ (in the graph, $y,x,\C_x,X,\C_y$ will somehow resemble a cycle).
\item all pairs of vertices in $V'$ are connected;
\item vertex $x$ is connected to $y$ and to each vertex in $\C_x$;
\item vertex $y$ is connected to $x$ and to each vertex in $\C_y$;
\item connections between $\C_x$ and $X$ and connections between $\C_y$ and $X$ are made according to the $\in$-relation.
\end{itemize}
The input vertex of our problem is $x$.

We observe that the graph is a big ``cycle'' made by five ``parts'': $y,x,\C_x,X,\C_y$. Moreover, it is possible to move from one part to any vertex in the next part in one step (except if we start or arrive in $X$, and in that case we need at most two steps). As a consequence, no shortest path can be longer than $3$.

This proves that any shortest path in the sum passing through $x$ must be of one of the following forms:
\begin{itemize}
\item a path from $y$ to $\C_x$;
\item a path from $\C_x$ to $\C_y$;
\item a path from $y$ to $X$
\end{itemize}

We note that the third case never occurs, because there always exists a path from $y$ to any vertex in $X$ of length $2$. The first case occurs for each vertex in $V_x$, and no other shortest path exists from $y$ to vertices in $\C_x$: these vertices contribute to the sum by $|\C|$. Finally, the second case occurs only if and only if there is a pair of vertices in $\C_y$ and $\C_x$ having no path of length 2, that is, two disjoint sets in $\C$. 

This proves that the betweenness of $x$ is bigger than $|\C|$ if and only if there are two disjoint sets in $\C$.
\end{proof}

\begin{theorem}\label{thm:clos}
$\sm\tds \leql \clos$.
\end{theorem}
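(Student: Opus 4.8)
The plan is to encode an instance $I=(X,\C)$ of $\sm\tds$ (we may assume, after a trivial preprocessing, that $|\C|\geq 2$ and $\emptyset\notin\C$) into a graph $G=\Phi(I)$ so that the \emph{least} central vertex of $G$ has total distance to the other vertices strictly above a fixed, instance-computable threshold exactly when $\C$ contains two disjoint sets. The normalizations usually adopted for closeness centrality are monotone in this total distance, so knowing the minimum closeness centrality of $G$ is the same as knowing $\max_v\sum_u d(v,u)$.

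\emph{The skeleton.} Mimicking the construction used for Theorem~\ref{thm:betv}, take two disjoint copies $\C_x,\C_y$ of $\C$, a set $V_X$ of ``element vertices'', and two apices $a,b$. Join each $C\in\C_x$ (resp.\ $C\in\C_y$) to the element vertices of its members, join $a$ to every vertex of $\C_x\cup V_X$, join $b$ to every vertex of $\C_y\cup V_X$, and add the edge $ab$. A direct inspection shows that every two vertices are within distance $2$ of each other, with a single family of exceptions: a pair $(C,C')$ with $C\in\C_x$ and $C'\in\C_y$ has a common neighbour --- hence is at distance $2$ --- precisely when $C\cap C'\neq\emptyset$, and is at distance exactly $3$ (via $C,a,b,C'$) otherwise. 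Writing $d_\cap(C):=|\{C'\in\C:\,C\cap C'=\emptyset\}|$, we therefore get, for every $C\in\C_x$,
\[
\sum_{u}d(C,u)\;=\;K_0-|C|+d_\cap(C)
\]
with $K_0$ depending only on $I$, and the same identity for $C\in\C_y$.

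\emph{Two corrections.} (i) The $-|C|$ term must be made constant, so that all of $\C_x\cup\C_y$ share the same baseline. To this end pad every set to the common cardinality $t:=\max_{C\in\C}|C|$ by adjoining brand-new ``dummy'' element vertices, using \emph{disjoint} dummy pools for the $\C_x$-side and for the $\C_y$-side, so that a dummy can never become a spurious common neighbour of a $\C_x$-vertex and a $\C_y$-vertex. Since $t\leq|X|$, this adds only $\O(|X|)$ vertices --- this is what keeps $\Phi$ a \pl\ reduction, whereas padding each set with its \emph{private} dummies would add $\Theta(|X|\cdot|\C|)$ vertices and break the size bound. After the padding, $\sum_{u}d(C,u)=K_1+d_\cap(C)$ for a value $K_1$ that is the same for \emph{all} $C\in\C_x\cup\C_y$; moreover $d_\cap(C)=0$ for every such $C$ when $I$ is a no-instance (here we use $\emptyset\notin\C$), whereas $d_\cap(C)\geq 1$ for some $C$ when $I$ is a yes-instance. (ii) I must guarantee that the least central vertex actually lies in $\C_x\cup\C_y$. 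A routine count gives $\sum_u d(a,u)$ and $\sum_u d(b,u)$ comfortably below $K_1$, but an element vertex of $V_X$ can have total distance exceeding $K_1$ by as much as $t-1$, which would let such a vertex be the sole minimizer in a no-instance. To repair this I add $t$ ``balancer'' vertices, each joined to all of $V_X$ (and to $a,b$): a balancer is at distance $1$ from every element vertex but at distance $2$ from every vertex of $\C_x\cup\C_y$, so introducing the $t$ balancers raises each $\C$-vertex total by $2t$ while raising each element-vertex total by only $t$. With the balancers in place, a short case analysis over the vertex types ($\C_x$, $\C_y$, $V_X$, $a$, $b$, the balancers) confirms that $\sum_u d(v,u)<K$ for every $v\notin\C_x\cup\C_y$, while $\sum_u d(C,u)=K+d_\cap(C)$ for $C\in\C_x\cup\C_y$, where $K:=K_1+2t$ is computable from $I$.

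\emph{Conclusion.} Thus the minimum closeness centrality of $G$ determines $\max_v\sum_u d(v,u)=K+\max_C d_\cap(C)$, and $I$ is a yes-instance if and only if this quantity is strictly larger than $K$; comparing the returned value with $K$ is the required linear-time post-processing. Finally $G$ has $\O(|\C|)+\O(|X|)=\otilde(|\C|)$ vertices, $\otilde(|\C|)$ edges, and is constructible in $\otilde(|\C|)$ time --- here we use crucially that $|X|=\O(\log|\C|)$ in the $\sm$ variant --- so $\Phi$ is a \pl\ reduction and $\sm\tds\leql\clos$. The hard part is correction (ii): forcing the least central vertex into $\C_x\cup\C_y$ while keeping its total distance sensitive to even a single disjoint pair is exactly the tension that forces both the size-equalizing padding and the balancer gadget, and pushing through the resulting chain of distance-sum inequalities is the crux of the argument.
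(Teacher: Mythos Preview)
Your reduction is correct, and the case analysis you defer does go through: the only vertices whose farness can rival the $\C$-vertices are the element vertices, and your $t$ balancers close exactly the gap of at most $t-1$ that you identified (the balancers themselves and the apices sit well below $K$, as a direct count confirms). It is, however, a genuinely different construction from the paper's. The paper uses a \emph{single} copy $V_2=\C$ of the collection, a clique on two copies $V_1\cup V_1'$ of $X$, and a set $V_3=\{(x,C):x\notin C\}$ of degree-one pendants, $|X|-|C|$ of them hanging off each $C$. The maximum-farness vertex is then automatically forced into $V_3$ (a pendant is always farther than its anchor), and the explicit farness computation shows that the $|C|$-dependent terms cancel on their own---the $|X|-|C|$ pendants on $C$ are precisely what absorbs the $-|C|$ drift---so every $(x,C)\in V_3$ has the same farness whenever no disjoint pair exists, and strictly larger farness otherwise. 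In short, the paper's single gadget $V_3$ performs both of your corrections at once. Your approach instead separates concerns: the two-apex skeleton (borrowed from the betweenness reduction) isolates the ``distance $3$ iff disjoint'' phenomenon first, and only then patches the size-dependence and the location of the minimizer with two independent gadgets. The paper's route is terser; yours is more modular and arguably easier to audit step by step.
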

\begin{proof}
Instead of minimizing the closeness centrality, we will try to maximize the \emph{farness}, which is the inverse of the closeness centrality, that is, the sum of all distances from $v$ to another vertex. In our construction, we will build a graph where the vertices with biggest farness correspond to sets in $\C$, and the value of the farness does not depend on the corresponding set, if this set is not disjoint to any other set. If this latter condition is not satisfied, then the farness of the vertex is bigger. In particular, let us consider an instance $I=(X,\C)$ of \sm\tds\, let us assume $X \notin \C$, and let us build a graph in the following way:
\begin{itemize}
\item $V=V_1 \cup V_1' \cup V_2 \cup V_3$, where $V_1$ and $V_1'$ are two disjoint copies of $X$, $V_2=\C$ and $V_3=\{(x,C) \in X \times \C: x \notin C\}$;
\item $V_1 \cup V_1'$ is a clique;
\item for $x \in V_1 \cup V_1'$ and $C \in V_2$, there is an edge from $x$ to $C$ if and only if $x \in C$;
\item for each $(x,C) \in V_3$ and $C' \in V_2$, there is a link between these vertices if and only if $C=C'$.
\end{itemize}

\noindent\textbf{Claim:} the vertex with maximum farness is in $V_3$.
\begin{proof}[Proof of claim]
For each vertex $v \in V_2$, consider a vertex $w \in V_3$ linked to $v$. It is clear that all shortest paths from $w$ to any other vertex must pass through $v$ (which is the only vertex linked to $w$). This means that the farness of $w$ is bigger than the farness of $v$.

For each vertex $v \in V_1 \cup V_1'$, let us consider a vertex $w \in V_2$ linked to $v$. The only vertices which are closer to $w$ than to $v$ are the vertices in $V_3$ attached to $w$, because each other neighbor of $w$ is a neighbor of $v$. These vertices influence the farness of $w$ by $|X|-|C|$, where $C$ is the set corresponding to $w$. However, there are $2(|X|-|C|)$ vertices in $V_1 \cup V_1'$ linked to $v$ and not to $w$ (the elements not in $C$): this proves that the farness of $w$ is bigger than the farness of $v$.
\end{proof}

At this point, let us consider the farness of vertices in $V_3$. In particular, let $(x,C)$ be an element of $V_3$ such that $C \cap C' \neq \emptyset$ for each $C'$: the farness of $(x,C)$ can be exactly computed by considering the classes of vertices in Table \ref{tab:far}.

\begin{table}[h!t]
\caption{The distance from $(x,C)$ to another vertex}\label{tab:far}
\begin{center}

\begin{tabular}{|l|l|r|r|}
\hline
Set & Kind of vertex & Number & distance from $(x,C)$ \\
\hline
$V_1 \cup V_1'$ & vertex in $C$ & $2|C|$ & 2 \\
$V_1 \cup V_1'$ & vertex outside $C$ & $2(|X|-|C|)$ & 3 \\
$V_2$ & $C$ & 1 & 1 \\
$V_2$ & $C' \neq C$ & $|\C|-1$ & 3 \\
$V_3$ & $(x',C)$ & $|X|-|C|$ & 2 \\
$V_3$ & $(x',C')$,$C' \neq C$ & $\sum_{C' \neq C}|X|-|C'|$ & 4 \\
\hline
\end{tabular}
\end{center}
\end{table}

Before computing the farness of $(x,\C)$, we compute $\sum_{C' \neq C}|X|-|C'|=(|\C|-1)|X|-\sum_{C' \neq C}|C'|=(|\C|-1)|X|-\sum_{C' \in \C} |C'|+|C|$.
The farness of $(x,\C)$ is then:

$$4|C|+6(|X|-|C|)+1+3(|\C|-1)+2(|X|-|C|)+4\left((|\C|-1)|X|-\sum_{C' \in \C} |C'|+|C|\right)=$$
$$=4|C|+8|X|-8|C|+1+3|\C|-3+4(|\C|-1)|X|-4\sum_{C' \in \C} |C'|+4|C|=$$
$$=4|\C||X|-4\left(\sum_{C' \in \C} |C'|\right)+3|\C|+4|X|-2.$$

Note that this value does not depend on the particular $(x,C)$ chosen (this was the main goal of our construction). It is clear that if $C \cap C'=\emptyset$, then the farness of each vertex $(x,C)$ and $(x,C')$ is bigger than the value previously computed.

As a consequence, there are two disjoint sets if and only if in the whole graph there is a vertex with farness bigger than $4|\C||X|-4(\sum_{C' \in \C} |C'|)+3|\C|+4|X|-2$, and both this value and the underlying graph can be computed in linear time. 

\end{proof}

\begin{theorem}\label{thm:diam}
$\sm\tds \leql \diam$.
\end{theorem}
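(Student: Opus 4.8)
The goal is to turn an instance $I=(X,\C)$ of $\sm\tds$ into a split graph $\Phi(I)=G$ whose diameter is $2$ when $\C$ contains no two disjoint sets and $3$ when it does; the final answer is then recovered by the constant-time rule ``yes $\iff$ $G$ has diameter $3$''. The reduction rests on the following elementary fact about split graphs: if $G$ is a connected split graph with non-empty clique part $K$ and independent part $S$, and every vertex of $S$ has a neighbour in $K$, then $\mathrm{diam}(G)\le 3$, and $\mathrm{diam}(G)=2$ precisely when every two vertices of $S$ have a common neighbour in $K$. Indeed, two vertices of $K$ are adjacent, a vertex of $K$ and a vertex $s\in S$ are joined through any $K$-neighbour of $s$, and two vertices $s,s'\in S$ are joined by the length-$3$ path using the edge between a $K$-neighbour of $s$ and a $K$-neighbour of $s'$ --- unless they already share one. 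So it suffices to arrange that ``two $S$-vertices share a $K$-neighbour'' mirrors ``the corresponding two sets intersect''.

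Concretely, I would let the clique be $K:=X\cup\{a,b\}$ (all edges inside $K$ present) and the independent set be $S:=\C_x\sqcup\C_y$, two disjoint copies of $\C$; a vertex $C\in\C_x$ is joined to $a$ and to every $x\in X$ with $x\in C$, and symmetrically a vertex $C\in\C_y$ is joined to $b$ and to every $x\in C$. By construction $G$ is a split graph, it is connected (each $S$-vertex sees $a$ or $b$, and $K$ is a clique), it is not complete ($S$ is independent with $2|\C|$ vertices), and every $S$-vertex has a $K$-neighbour, so $\mathrm{diam}(G)\in\{2,3\}$ and $G$ is a legitimate instance of $\diam$. Moreover any two vertices of $\C_x$ have common neighbour $a$, any two of $\C_y$ have $b$, while the vertices coming from $D\in\C$ in $\C_x$ and from $E\in\C$ in $\C_y$ have as common neighbours exactly $\{x:x\in D\cap E\}$, because $a$ is adjacent only to $\C_x$ and $b$ only to $\C_y$.

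It remains to read off correctness. Dispose of the cases $|\C|\le 1$ directly in linear time, and assume $|\C|\ge 2$. If $\C$ has no two disjoint sets then in particular $\emptyset\notin\C$, so the two copies of any single set still share a neighbour, and by the previous paragraph every pair of $S$-vertices --- both in $\C_x$, both in $\C_y$, or one in each --- has a common neighbour in $K$; together with the fact that every $(K,S)$ pair is at distance $\le 2$ (via $a$ or $b$), this gives $\mathrm{diam}(G)=2$. Conversely, if $D,E\in\C$ are disjoint, the copy of $D$ in $\C_x$ and the copy of $E$ in $\C_y$ are non-adjacent and have no common neighbour, hence lie at distance exactly $3$, so $\mathrm{diam}(G)=3$. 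Finally $G$ has $2|\C|+|X|+2=\Theta(|\C|)$ vertices and $O(|X|^2+|X|\,|\C|)$ edges, which is $\otilde$ of the input size since $|X|$ is polylogarithmic in $|\C|$ for a $\sm$-instance, and $G$ is computable within that time; so $\Phi$ is a \pl\ reduction.

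The one genuinely delicate point --- and the reason for using two copies $\C_x,\C_y$ with two \emph{private} apex vertices $a,b$ rather than a single copy of $\C$ --- is that we must simultaneously keep all $S$-vertices (including those coming from very small sets, which is what forces connectivity) within distance $3$ of everything, yet prevent the apex vertices from acting as spurious common neighbours across the two sides. With a single copy, the apex vertex needed to connect the empty or near-empty sets would be a common neighbour of \emph{every} pair of $S$-vertices and would force diameter $2$, destroying the reduction. The remaining verifications (that $G$ is split, and the $(K,S)$-distance bound) are routine.
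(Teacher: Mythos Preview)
Your proof is correct, and the underlying idea---make $X$ the clique part, make $\C$ the independent part, and join by membership so that ``common neighbour'' coincides with ``non-empty intersection''---is exactly the paper's. The paper, however, uses the bare version of this construction: clique $X$, independent set a \emph{single} copy of $\C$, and edges $\{x,C\}$ for $x\in C$, with no apex vertices at all. Two $\C$-vertices then share a neighbour iff the sets meet, and the diameter is $3$ iff some pair is disjoint.

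Your extra machinery (two copies $\C_x,\C_y$ and private apices $a,b$) is there to guarantee connectivity and diameter~$\le 3$ even when $\emptyset\in\C$. But this concern can be dispatched by the same linear-time preprocessing you already invoke for $|\C|\le 1$: if $\emptyset\in\C$ and $|\C|\ge 2$, output \textbf{True} immediately; otherwise every $C\in\C$ is non-empty, so every $\C$-vertex has a neighbour in $X$ and the simple construction already gives a connected split graph of diameter $2$ or $3$. So your argument is sound but carries more baggage than needed; the paper's version is a strict simplification of yours, at the cost of leaving the $\emptyset\in\C$ case implicit.
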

\begin{proof}
Given an input $I=(X,\C)$ of $\textsc{\sm TwoDisjointSet}$, construct a split graph $G=(X \cup \C,E)$, where each pair in $X$ is connected, and for each set $C \in \C$ we add an edge from $C$ to its elements.

Since each vertex is at distance 1 from $X$, the diameter is 2 or 3: it is 3 if and only if there exist two different vertices $C,C' \in \C$ with no common neighbor. It is clear that this happens if and only if $C,C'$ are disjoint.
\end{proof}

\begin{theorem}\label{thm:hyper}
$\diamg \leql \hyper$.
\end{theorem}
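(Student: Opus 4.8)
The plan is to turn an instance $G=(V,E)$ of \diamg\ (so $G$ is connected with $\mathrm{diam}(G)\in\{2,3\}$) into a graph $H$ with two distinguished vertices $a$ and $b$ such that the hyperbolicity of the pair $(a,b)$ in $H$ distinguishes $\mathrm{diam}(G)=2$ from $\mathrm{diam}(G)=3$. The graph $H$ keeps $V$ with all edges of $G$, adds two new vertices $a,b$ with $a\not\sim b$, and for each $v\in V$ adds two private ``handle'' vertices: a vertex $p_v$ adjacent to $a$ and to $v$, and a vertex $q_v$ adjacent to $b$ and to $v$; we do \emph{not} add the edge $p_vq_v$. The fixed pair of the \hyper\ instance is $(a,b)$. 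Since $|V(H)|=3|V|+2$ and $|E(H)|=|E|+4|V|$, the map $G\mapsto(H,a,b)$ is linear-time, and to recover the answer of \diamg\ it suffices to output ``$2$'' when the hyperbolicity of $(a,b)$ is at most $1$ and ``$3$'' when it is at least $3/2$ (there is nothing in between, since hyperbolicity of an unweighted graph is a half-integer).

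First I would record the distances in $H$ used below; since every handle has only two neighbours, this is just a check of short paths. The essential facts are $d_H(a,v)=d_H(b,v)=2$ for $v\in V$, $d_H(a,b)=4$, and — the point of the construction — that the $G$-metric is preserved: $d_H(u,v)=d_G(u,v)$ for $u,v\in V$, because any $u$--$v$ path leaving $V$ must use at least two handle edges plus a traversal of $a$ or $b$, hence has length $\ge 4>3\ge d_G(u,v)$ and never shortcuts. One also needs $d_H(a,p_v)=1$ and $d_H(b,p_v)=3$ (symmetrically $d_H(b,q_v)=1$, $d_H(a,q_v)=3$), $d_H(p_v,q_v)=2$, $d_H(p_v,p_w)=d_H(q_v,q_w)=2$ for $v\ne w$, $d_H(p_v,w)=d_H(q_v,w)=\min(1+d_G(v,w),3)$ for $w\in V\setminus\{v\}$ (and $d_H(p_v,v)=1$), and $d_H(p_v,q_w)=\min(2+d_G(v,w),4)$ for $v\ne w$.

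The heart of the proof is the evaluation of $\delta_H(a,b,x,y)$ over all $x,y$ (quadruples in which $x$ or $y$ is $a$ or $b$ are degenerate and give $0$). For $x,y\in V$ the three matching-sums are $d_H(a,b)+d_G(x,y)=4+d_G(x,y)$, $d_H(a,x)+d_H(b,y)=4$, and $d_H(a,y)+d_H(b,x)=4$, so $\delta_H(a,b,x,y)=d_G(x,y)/2$: this is $3/2$ exactly when $d_G(x,y)=3$, and at most $1$ otherwise. Thus if $\mathrm{diam}(G)=3$ some quadruple already attains hyperbolicity $3/2$, while if $\mathrm{diam}(G)=2$ each such quadruple contributes at most $1$. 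It remains to check that \emph{no} quadruple involving a handle vertex ever exceeds hyperbolicity $1$, regardless of $\mathrm{diam}(G)$; this is a finite case analysis on the type of $\{x,y\}$ — two vertices of $V$ (done), one vertex of $V$ and one handle, two $p$-handles, two $q$-handles, a matched pair $p_v,q_v$, an unmatched pair $p_v,q_w$ — where in every case one plugs the distances above into the three-sum formula and observes that the two largest sums differ by at most $2$. Putting the pieces together, the hyperbolicity of $(a,b)$ in $H$ is at most $1$ if $\mathrm{diam}(G)=2$ and at least $3/2$ if $\mathrm{diam}(G)=3$, which proves $\diamg\leql\hyper$.

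The step I expect to be the real obstacle is precisely this last verification, and it dictates the design. The handle vertices sit at distance $1$ from one of $a,b$ but $3$ from the other and from distant vertices of $G$, and one must make sure they never form a ``fat square'' with $a$ and $b$ of hyperbolicity $\ge 3/2$ in a diameter-$2$ instance. This is why the two handle families are kept disjoint and, above all, why $d_H(a,b)$ is raised to $4$: the naive gadget in which $a$ and $b$ share a common neighbour $m$ (so $d_H(a,b)=2$) fails, since $m$ then lies at distance $3$ from vertices of $G$ that are at distance $2$ from both $a$ and $b$, and the quadruple $\{a,b,m,v\}$ has hyperbolicity $1$ no matter what $\mathrm{diam}(G)$ is — the same value diameter-$2$ instances already reach — so no threshold separates the two cases; enlarging $d_H(a,b)$ to $4$ lifts the ``$\mathrm{diam}(G)=3$'' value to $3/2$ while leaving all spurious quadruples at $1$.
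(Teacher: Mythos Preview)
Your construction is the same as the paper's: your $a,b,\{p_v\},\{q_v\},V$ are exactly the paper's $x,y,V_x,V_y,\tilde V$, and the core observation that $d_H(a,b)=4$, $d_H(a,v)=d_H(b,v)=2$ forces the hyperbolicity of a $V$--$V$ quadruple to equal $d_G(v,w)$ (up to your factor $1/2$ convention) is identical. The only cosmetic difference is in dispatching the handle quadruples: you propose a direct case split on the types of $x,y$, whereas the paper argues monotonicity---replacing a handle $p_v$ by its partner $v\in\tilde V$ never decreases the hyperbolicity---which collapses all mixed cases at once.
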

\begin{proof}
Let $G=(V,E)$ be an input graph for the $\diamg$ problem. The corresponding graph for the $\hyper$ problem is $H=(V',E')$, where $V'=\{x\} \cup V_x\cup \tilde{V},\cup V_y \cup \{y\}$, where $V_x$, $\tilde{V}$ and $V_y$ are disjoint copies of $V$. Edges in $E'$ are defined as follows:
\begin{itemize}
\item $x$ is connected to every vertex in $V_x$ and $y$ is connected to every vertex in $V_y$;
\item corresponding vertices in $V_x$ and $V$ and corresponding vertices in $V$ and $V_y$ are connected;
\item if $(v,w)$ is an edge of $G$, then the copies of $v$ and $w$ in $\tilde{V}$ are linked.
\end{itemize}
In the instance of the $\hyper$ problem we ask if the maximum hyperbolicity of a quadruple containing $x$ and $y$ is bigger than 2. We will prove that this holds if and only if the diameter of $G$ is bigger than 2.

Let us first consider quadruples with vertices $x,y,v,w$ with $v,w \in \tilde{V}$. In these quadruples, $S_1=d(x,y)+d(v,w)=4+d(v,w)$, $S_2=S_3=4$, since the distance from $x$ and $y$ to any vertex in $\tilde{V}$ is 2. The hyperbolicity of such a quadruple is then equal to $d(v,w)$, which reaches the maximum if $v,w$ are a diametral pair. We conclude that if we restrict to such quadruples, then the maximum hyperbolicity equals the diameter.

It remains only to prove that all other quadruples have smaller hyperbolicity. If $v,w \in V_x$ (or $v,w \in V_y$), then $d(v,w)=2$ by passing through $x$, and $S_1=d(x,y)+d(v,w)=6$, $S_2=d(x,v)+d(y,w)=4$ and similarly $S_3=4$. The hyperbolicity is then $2$. Otherwise, if $v \in V_x$ and $w \notin V_x$, $S_2=d(x,w)+d(y,v)=3+d(y,v)\geq 5$ and $S_3=d(x,v)+d(y,w)=1+d(y,w)=1+4-d(y,v)=5-d(y,v)\leq 3$. As a consequence, if vertex $v$ is moved to the corresponding vertex in $\tilde{V}$, $S_3$ can only increase by one and $S_1$ can only decrease by one. Moreover, $S_2$ decreases by one, meaning that, if $S_i'$ are the new values, $S_1'-S_2' \geq S_1-1-(S_2-1)=S_1-S_2$, and the hyperbolicity of the 4-tuple is increased. The proof for all other cases are symmetrical.
\end{proof}

\begin{theorem}\label{thm:matzero}
$\sm\tds \leql \matzero$.
\end{theorem}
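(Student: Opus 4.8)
The plan is to reuse the binary-vector encoding of sets from Remark~\ref{rem:ortbv}. Fix an ordering of the ground set $X$ and, for each $C\in\C$, let $v_C\in\{0,1\}^{X}$ be its indicator vector, with a $1$ in coordinate $x$ exactly when $x\in C$. Over $\mathbb{Z}$ (equivalently, over the Boolean semiring), two sets $C,C'$ are disjoint if and only if $\langle v_C,v_{C'}\rangle=0$, since no cancellation can occur. Stacking the vectors $\{v_C:C\in\C\}$ as the rows of a matrix $M$, the $(C,C')$ entry of $MM^{T}$ is precisely $|C\cap C'|$, so a single matrix product exposes all pairwise intersection sizes at once. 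We therefore let $\Phi(I)$ be the instance of $\matzero$ given by the matrices $M$ and $M^{T}$ (or just $M$, if $\matzero$ is phrased as squaring a single matrix), asking whether the product contains a zero entry.

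It remains to argue that $\Phi$ preserves the answer and is quasilinear. For the answer, first dispose of the degenerate case $\emptyset\in\C$: then the empty set is disjoint from every member of $\C$, so the output of $\sm\tds$ is \textbf{True} and can be reported after a linear-time check; hence we may assume $\emptyset\notin\C$. Under this assumption every diagonal entry of $MM^{T}$ equals $|C|\ge 1$, so $MM^{T}$ has a zero entry if and only if it has an off-diagonal zero, that is, if and only if there are two distinct sets $C,C'\in\C$ with $C\cap C'=\emptyset$. Thus the boolean output of $\matzero$ on $\Phi(I)$ equals the output of $\sm\tds$ on $I$; if $\matzero$ instead returns the list or the number of zero entries, the boolean answer is recovered in linear time.

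For the size bound, here is where we use that in a $\sm\tds$ instance the collection is exponentially larger than the ground set, i.e.\ $|X|=\O(\log^{k}|\C|)$ for some constant $k$. Then $M$ (and $M^{T}$) has $|\C|\cdot|X|=\otilde(|\C|)$ entries and can be written down in $\otilde(|\C|)=\otilde(s(I))$ time, using $|\C|\le s(I)$; so $\Phi$ is a quasilinear Karp reduction. Observe that the matrices produced are ``thin'' — one dimension is polylogarithmic in the other — which is exactly the regime in which $\matzero$ is quasi-quadratic but conjecturally admits no truly subquadratic algorithm, and for which the reduction stays within quasilinear size.

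I do not expect a real obstacle here: the reduction is just the matrix reformulation of orthogonal-vector search. The two points needing care are the degenerate empty-set case (handled above through the diagonal-versus-off-diagonal distinction) and the ring over which $\matzero$ is taken — as in Remark~\ref{rem:ortbv}, one should work over $\mathbb{Z}$ (or the Boolean semiring), so that ``$\langle v_C,v_{C'}\rangle=0$'' is genuinely equivalent to ``$C\cap C'=\emptyset$'' rather than a congruence modulo the characteristic.
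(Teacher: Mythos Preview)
Your proof is correct and follows essentially the same route as the paper: encode each $C\in\C$ by its indicator vector, assemble these into a $(0{-}1)$-matrix, and observe that the $(C,C')$ entry of the product with its transpose is $|C\cap C'|$, so a zero appears iff two sets are disjoint. The only cosmetic differences are that the paper writes the incidence matrix as $|X|\times|\C|$ and looks at $M^{T}M$ (and pads with empty rows to make it square), whereas you transpose and look at $MM^{T}$; you are also a bit more careful than the paper about the diagonal (the $\emptyset\in\C$ case) and about the ring, neither of which the paper spells out.
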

\begin{proof}
Let $(X,\C)$ be an instance of the $\sm\tds$ problem. Consider the $|X| \times |\C|$ matrix $M$ containing $1$ in place $(x,C)$ if and only if $x \in \C$. It is clear that $M^TM$ contains a zero in place $C,C'$ if and only if $C \cap C'=\emptyset$. If we want the matrix to be square, it is enough to add $|\C|-|X|$ empty lines after the first $|X|$ lines (this does not change the input size as long as the matrix is stored as an adjacency list).
\end{proof}

\begin{theorem}\label{thm:bgdt}
$\tcov \leql \bgdt$.

$\tcov \leql \bsdp$.
\end{theorem}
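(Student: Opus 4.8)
The plan is to reduce \tcov\ to the two bipartite domination problems by the standard "two sides of a bipartite graph" encoding, mirroring the trick already used several times in this section (e.g.\ in Remark~\ref{rem:bgdomv}). Recall that in \tcov\ we are given a ground set $X$ and a collection $\C$ of subsets of $X$, and we must decide whether there exist $C,C' \in \C$ with $C \cup C' = X$. The key observation is that $C \cup C' = X$ is equivalent to: every $x \in X$ fails to be \emph{missed} by both $C$ and $C'$, i.e.\ the complements $C^C$ and $C'^C$ are disjoint, but more usefully for domination, it says that $\{C,C'\}$ collectively ``hits'' every element of $X$. So the natural graph is bipartite with one side being $\C$ and the other side being $X$, with an edge $(C,x)$ whenever $x \in C$; then a pair of vertices in the $\C$-side dominates the $X$-side precisely when the corresponding two sets cover $X$.

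For \bsdp, which (per Appendix~\ref{app:prob}) asks for a pair of vertices on one fixed side of a bipartite graph dominating the other side, this construction is essentially immediate: set $\Phi(X,\C) = (\C, X, A)$ with $A(C,x) \iff x \in C$, and ask whether some two vertices in the $\C$-side dominate $X$. Correctness is the equivalence just stated; the map is clearly computable in linear time in the input size (the number of incidences $\sum_{C}|C|$), and since in the \sm-free \tcov\ version we allow arbitrary collection size, there is no size blow-up to worry about. The main subtlety is to confirm the exact definition of \bsdp\ in the appendix — in particular whether the dominating pair must come from a prescribed side, and whether a vertex is considered to dominate itself — and to handle the degenerate cases (e.g.\ a single set $C = X$, or repeated sets) by a minor padding argument analogous to the one in Theorem~\ref{thm:sdoms} if the problem definition forbids using the same vertex twice.

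For \bgdt\ (\bgdts), the target is to find \emph{three} vertices, so we must force the third vertex to contribute nothing, reducing the $3$-domination question back to a $2$-domination question on the $\C$-side. The plan is to take the same bipartite incidence graph between $\C$ and $X$ and augment it with a ``universal'' vertex $u$ added to the $\C$-side that is adjacent to nothing in $X$ (or, if the problem requires the dominating triple to lie in one part and the graph to have no isolated vertices, attach $u$ to a fresh pendant vertex $x_0$ added to $X$ and make every real $C$ also adjacent to $x_0$, so $x_0$ is free and $u$ is useless for covering $X \setminus \{x_0\}$). Then $X$ (resp.\ $X \cup \{x_0\}$) is dominated by three vertices of the $\C$-side iff it is dominated by two ``real'' ones, iff two sets in $\C$ cover $X$. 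One then checks the converse direction — that any dominating triple can be assumed w.l.o.g.\ to use at most two real set-vertices — which may need the usual trick of duplicating $u$ a few times so that the ``wasted'' slots can always be filled.

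The step I expect to be the real obstacle is pinning down the precise formal statement of \bgdt\ and \bsdp\ from the appendix and making the reduction respect every clause of that definition: whether domination is closed or open, which side the solution vertices live on, whether vertices may be reused, and whether isolated vertices are allowed. Once those conventions are fixed, the combinatorial content is exactly the covering-equals-domination equivalence above, and the remaining work is the routine padding (adding logarithmically-many or constantly-many dummy vertices, as in Theorems~\ref{thm:sdoms} and~\ref{thm:stds}) to rule out spurious solutions, together with the trivial check that $\Phi$ is computable in $\otilde$ of the input size.
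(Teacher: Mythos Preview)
Your \bsdp\ reduction is exactly the paper's: build the incidence bipartite graph on $X$ and $\C$, take $V'=X$, and observe that a pair dominating $X$ is precisely a two-cover. No padding is needed.

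For \bgdt\ there is a genuine gap coming from a misreading of the problem. By the appendix, \bgdt\ is the ordinary $3$-dominating set restricted to bipartite inputs: the triple may lie anywhere and must dominate the \emph{entire} vertex set $X\cup\C$, not just the $X$-side. Your ``wasted slot'' plan (an isolated or pendant $u$ on the $\C$-side, possibly duplicated) never explains how the $\C$-side gets dominated; with only three vertices and a bipartite graph, a triple consisting of set-vertices can cover at most three vertices of $\C$, so no such triple dominates the graph once $|\C|>3$. The paper's construction goes the other way: it adds a single vertex $v_0$ on the $X$-side adjacent to \emph{all} of $\C$. Then $v_0$ single-handedly dominates the whole $\C$-side, and the remaining two members of a dominating triple must cover $X$; conversely any dominating triple must contain some vertex of the $X$-side (otherwise $\C$ is not dominated), and the other two yield a cover of $X$ (if one of them also lies in $X$, the single remaining $\C$-vertex already covers $X$ by itself). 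So the third vertex is not a throwaway at all --- it is the vertex that handles the other side of the bipartition.

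Your second option almost stumbles onto this, since making $x_0$ adjacent to every $C$ is exactly the paper's $v_0$; but the motivation you give (forcing the third slot to be useless on the $\C$-side) is backwards, the extra $u$ is irrelevant, and your planned converse argument (``any dominating triple uses at most two real set-vertices'') is aimed at the wrong target. Rewriting the \bgdt\ half around the $v_0$ idea and the two-line converse above gives the paper's proof.
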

\begin{proof}
Let $I=(X,\C)$ be an instance of $\tcov$, and let us create a bipartite graph $G=(V,W,E)$ with $V=X$, $W = \C$ and $E$ defined as the set of pairs $(x,C)$ such that $x \in C$. Then, each 2-covering of $X$ with sets in $\C$ corresponds to a pair of vertices of $W$ that covers $V$.

This way, we proved that $\tcov \leql \bsdp$, by choosing $V$ as the subset to cover.

In order to prove that $\tcov \leql \bgdt$, we add another vertex $v_0 \in V$ connected to any $w \in W$: clearly, if there is a 2-covering of $X$, there is a dominating triple in $G$, which is the 2-covering and $v$.

Viceversa, if there is a dominating triple, one of the vertices of the triple must be in $V$: the other two vertices form a dominating pair (if only one of them is in $W$, the corresponding set is the whole $X$).
\end{proof}

\begin{theorem}[based on Lemma 1 in \cite{Abboud2014}] \label{thm:lalign}
$\sm\tcov \leql \lalign$.
\end{theorem}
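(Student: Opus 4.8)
The goal is a \pl\ Karp reduction from \sm\tcov\ to \lalign, so the plan is to reuse the construction of Lemma 1 in \cite{Abboud2014}, which already encodes an orthogonal-vectors-type instance as a local string alignment instance, and to check that it transfers our \sm\tcov\ instance correctly and within the size and time budget allowed by a \pl\ reduction. First I would recall the input of \sm\tcov: a ground set $X$ with $|X| = \O(\log^k |\C|)$ and a collection $\C$ of subsets of $X$, asking whether two sets in $\C$ cover $X$. By Remark \ref{rem:tcov} (or directly), complementation turns this into the question whether two sets in the complemented collection $\bar\C$ are disjoint, i.e.\ into an orthogonality question on the $0/1$ incidence vectors of length $|X|$; so it suffices to feed these vectors into the string-alignment gadget of \cite{Abboud2014}.

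The key steps, in order, are: (1) from $(X,\C)$ produce the family of characteristic vectors $\{v_C : C\in\bar\C\}\subseteq\{0,1\}^{|X|}$, each of dimension $d = |X| = \O(\log^k|\C|)$; (2) apply the coordinate-gadget construction of \cite{Abboud2014}, which turns each vector into a short string over a fixed alphabet (a block per coordinate, chosen so that aligning the block of a $1$-coordinate of one string against the corresponding block of a $1$-coordinate of the other incurs a large penalty, while all other block pairings are ``cheap''), and concatenates the strings with separators so that an optimal local alignment is forced to pick one whole vector-string from each side; (3) observe that the resulting two strings have length $\O(|\C|\cdot\mathrm{poly}(d)) = \otilde(s(I))$, since each per-coordinate gadget has length polylogarithmic in $|\C|$, and that they are computable in $\otilde(s(I))$ time by scanning $\C$ once; (4) translate the threshold: set the target alignment score so that it is achieved if and only if some pair of vector-strings can be aligned with no ``expensive'' coordinate collision, i.e.\ if and only if some pair $v_C, v_{C'}$ is orthogonal, i.e.\ if and only if the corresponding pair in $\C$ is a $2$-covering of $X$; (5) note that the Boolean output is preserved exactly, so no output-transformation function is needed.

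\textbf{Main obstacle.} The delicate point is step (2)--(4): making the alignment gadget so that an optimal local alignment is \emph{provably} of the ``one vector-string versus one vector-string, coordinate-aligned'' form, and that its score cleanly separates the orthogonal case from the non-orthogonal case. This is exactly the content of Lemma 1 in \cite{Abboud2014}, so I would not re-prove it; instead the work is to verify that our instance meets that lemma's hypotheses (a set of $N = |\C|$ Boolean vectors of dimension $d$) and that in our setting $d = \O(\log^k|\C|)$, so the blow-up incurred by the per-coordinate gadgets is only polylogarithmic and the reduction stays \pl\ rather than merely polynomial. A secondary point to check is that \sm\tcov\ rather than plain \tcov\ is what is being reduced: the bound $|X|=\O(\log|\C|)$ (up to the $k$-th power) is precisely what keeps the gadget lengths small, so the hypothesis ``\sm'' is used essentially, which is why the arrow in Figure \ref{fig:scheme} goes from \sm\tcov\ and not from \tcov.
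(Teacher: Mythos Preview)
Your plan is correct and would yield a valid \pl\ reduction, but it follows a different route from the paper's own argument. You propose to complement first (via Remark~\ref{rem:tcov}) so as to land in an orthogonal-vectors instance, and then to invoke the construction of \cite{Abboud2014} as a black box on the complemented family $\bar\C$. The paper instead adapts the gadget so that it encodes the covering condition \emph{directly}, with no complementation detour: each coordinate $i\in X$ becomes a length-$5$ chunk $01?01$ in the first string (with $?=1$ iff $i\in C$) and $*1?*1$ in the second string (with $?=*$ iff $i\in C'$), and the per-set strings are concatenated with separators $000$ and $111$ respectively. The asymmetric placement of the wildcard is the whole point: position $5i+2$ matches if and only if $i\in C\cup C'$, so a common substring of length $5|X|$ exists exactly when $C\cup C'=X$. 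The remainder of the paper's proof is a short case analysis showing that no separator can occur in a length-$5|X|$ match and that nontrivial shifts are impossible.

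What each approach buys: yours is modular and makes the dependence on \cite{Abboud2014} explicit, at the cost of an extra complementation step and of leaving the gadget's internals unverified; your score/penalty language also does not quite match the paper's formulation of \lalign\ as longest-common-substring with a wildcard symbol, so you would still owe a sentence translating the threshold into a substring-length threshold. The paper's version is self-contained, explains concretely why ``based on'' rather than ``by'' Lemma~1 of \cite{Abboud2014}, and makes visible where the \sm\ hypothesis enters (each set contributes a block of length $5|X|$, so the two strings have length $\tilde\O(|\C|)$).
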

\begin{proof}
Let us consider an instance $(X,\C)$ of problem \sm\tcov, assuming $X=0,\dots,k-1$. We code each element $C \in \C$ into a binary string, which is a concatenation of $k$ chunks of the form $01?01$. The $?$ in the $i$-th chunk is $1$ if $i \in C$, $0$ otherwise. We then concatenate all strings for each element $C \in \C$, separated with the string $000$. This way, we constructed the first string.

The binary strings forming the second string are similarly made, but there is a difference in chunks, which are of the form $*1?*1$, where $*$ is the character that may be paired with any other character, and $?$ is $*$ if $i \in C$, $1$ otherwise. these strings are concatenated with $111$ as separator.

We want to prove that two sets $C,C'$ cover $\C$ if and only if the longest common substring of the two strings has length $5k$.

First, let us suppose that there is a common string of length $5k$: we claim first that no separator can occur in this string. This happens because in the first string there is never the sequence $111$ and in the second string there is no $000$. This means that the two equivalent strings of length $5k$ must correspond to two subsets, eventually translated by $-2,-1,1$ or $2$ positions. The sum of the two translation is then smaller than $4$ in absolute value.

However, with a case by case analysis, it is possible to exclude all such translations. For instance, a $-1$ translation would imply that the $6$th element in the first string ($0$) matches the $5$th element in the second string ($1$), and a $+2$ translation would imply that element $5i+2$ (which is $0$ if $i \notin C$) would match element $5i+4$ in the second string, which is $1$.

We concluded that a common substring of length $5k$ must be made by two pieces corresponding to a set. Now it is easy to see that places $5i,5i+1,5i+3,5i+4$ always correspond, while places $5i+2$ correspond if and only if element $i$ is in at least one of the two sets. This concludes the proof.

\end{proof}

\section{Conclusions and Open Problems}\label{sec:conc}

In this paper, we have analyzed many results on quadratic-time problems. We have proved that two of these problems are solvable in truly subquadratic time (recognizing transitive graphs and comparability graphs), and we have provided hardness results for many others. This work can be seen as a starting point to develop many more reductions and include inside this class many new problems.

For instance, it would be really interesting to link these results with all existing reductions on the \tsum\ problem: until now, we have not been able to link it with SETH. However, it is possible to link it with other problems: for example, it is known that the local alignment of strings problem is harder that \tsum\ \cite{Abboud2014}.

Another open problem deals with the radius of graphs: this measure is similar to the diameter, but it looks somehow ``easier'' to compute, for example because the radius of chordal graphs is linear-time computable \cite{ChepoiD94}. The question is whether also this problem can be inserted in our class of quadratic-time hard problems, or a truly subquadratic algorithm exists.

Among other problems that are not in our class, but have no truly subquadratic time algorithm, we find the computation of the transitive reduction of a directed graph (a ``converse'' of the transitive closure), finding maximum flows in networks or finding maximum matchings in bipartite weighted graphs. All these problems are defined in Appendix \ref{app:prob}.

\newpage

\bibliographystyle{plain}
\bibliography{library}

\newpage
\appendix

\section{Problem Definitions}\label{app:prob}

In this section, we will precisely define all the problems we are dealing with.

\subsection*{Hard Problems}

\begin{prob}
{\bet.}{a graph $G=(V,E)$.}{the betweenness centrality of each vertex $v$ of $G$, that is, $$\sum_{v \neq s \neq t \in V} \frac{\text{number of shortest paths from }s\text{ to }t \text{ through }v}{\text{number of shortest paths from }s\text{ to }t}.$$}
\end{prob}

\begin{prob}
{\betv.}{a graph $G=(V,E)$ and a vertex $v \in V$.}{the betweenness centrality of $v$, that is, $$\sum_{v \neq s \neq t \in V} \frac{\text{number of shortest paths from }s\text{ to }t \text{ through }v}{\text{number of shortest paths from }s\text{ to }t}.$$}
\end{prob}

%

\begin{prob}
{\sm\doms.}{a set $X$ and a collection $\C$ of subsets of $X$ such that $|X| < \log^k(|\C|)$ for some $k$.}{\textbf{True} if there are two sets $C,C' \in \C$ such that $C \subseteq C'$, \textbf{False} otherwise.}
\end{prob}

\begin{prob}
{\sm\tcov.}{a set $X$ and a collection $\C$ of subsets of $X$ such that $|X| < \log^k(|\C|)$ for some $k$.}{\textbf{True} if there are two sets $C,C' \in \C$ such that $X=C \cup C'$, \textbf{False} otherwise.}
\end{prob}

\begin{prob}
{\sm\tds.}{a set $X$ and a collection $\C$ of subsets of $X$ such that $|X| < \log^k(|\C|)$ for some $k$.}{\textbf{True} if there are two disjoint sets $C,C' \in \C$, \textbf{False} otherwise.}
\end{prob}

\begin{prob}
{\bgdt.}{a bipartite graph $G=(V,E)$.}{a triple $v,w,x$ such that $V=N(v) \cup N(w) \cup N(x) \cup \{v,w,x\}$, if it exists.}
\end{prob}

\begin{prob}
{\bsdp.}{a graph $G=(V,E)$ and a subset $V'\subseteq V$.}{a pair $v,w$ such that $V'=N(v) \cup N(w) \cup \{v,w\}$, if it exists.}
\end{prob}

\begin{prob}
{\bgdomv.}{a bipartite graph $(V_1,V_2,E)$.}{\textbf{True} if there are vertices $v,w$ such that $N(v) \supseteq N(w)$, \textbf{False} otherwise.}
\end{prob}

\begin{prob}
{\gdt.}{a graph $G=(V,E)$.}{a triple $v,w,x$ such that $V=N(v) \cup N(w) \cup N(x) \cup \{v,w,x\}$, if it exists.}
\end{prob}

\begin{prob}
{\diamg.}{a graph $G$.}{\textbf{True} if $G$ has diameter $2$, \textbf{False} otherwise.}
\end{prob}
\begin{prob}
{\gdomv.}{a graph $(V,E)$.}{\textbf{True} if there are vertices $v,w$ such that $N(v) \supseteq N(w)$, \textbf{False} otherwise.}
\end{prob}


\begin{prob}
{\hyper.}{a graph $G=(V,E)$ and two vertices $x,y$.}{the maximum hyperbolicity of a quadruple $x,y,v,w$, where the hyperbolicity of a quadruple is $S_1-S_2$, where $S_1$ is the maximum sum among $d(x,y)+d(v,w), d(x,v)+d(y,w), d(x,w)+d(y,v)$, and $S_2$ the second maximum sum.}
\end{prob}

\begin{prob}
{\ksatt.}{two sets of variables $\{x_i\}$, $\{y_j\}$ of the same size, a set $C$ of clauses over these variables, such that each clause has at most size $k$, the set of possible evaluations of $x_i$ and the set of possible evaluations of $\{y_j\}$.}{\textbf{True} if there is an evaluation of all variables that satisfies all clauses, \textbf{False} otherwise.}
\end{prob}

\begin{prob}
{\lalign.}{two binary strings with a symbol $*$ that may replace any character.}{The longest common substring of the two strings.}
\end{prob}

\begin{prob}
{\msimp.}{a set $X$ and a collection $\C$ of subsets of $X$.}{the maximum simple family of subsets of $X$ (simple means without inclusion).}
\end{prob}

\begin{prob}
{\clos.}{a graph $G=(V,E)$ and a threshold $\sigma$.}{\textbf{True} if there exists a vertex with closeness centrality smaller than $\sigma$, \textbf{False} otherwise. The closeness centrality of a vertex is defined as $\frac{1}{\sum_{w \in V}d(v,w)}$, where $d(v,w)$ is the distance between vertices $v$ and $w$.}
\end{prob}

\begin{prob}
{\ortbv.}{a collection of binary vectors.}{\textbf{True}, if there are two orthogonal vectors, \textbf{False} otherwise.}
\end{prob}

\begin{prob}
{\doms.}{a set $X$ and a collection $\C$ of subsets of $X$.}{\textbf{True} if there are two sets $C,C' \in \C$ such that $C \subseteq C'$, \textbf{False} otherwise.}
\end{prob}

\begin{prob}
{\diam.}{a split graph $G$.}{\textbf{True} if $G$ has diameter $2$, \textbf{False} otherwise.}
\end{prob}

\begin{prob}
{\hasse.}{a set $X$ and a collection $\C$ of subsets of $X$.}{a tree ordering on the sets in $\C$ by inclusion.}
\end{prob}

\begin{prob}
{\tcov.}{a set $X$ and a collection $\C$ of subsets of $X$.}{\textbf{True} if there are two sets $C,C' \in \C$ such that $X=C \cup C'$, \textbf{False} otherwise.}
\end{prob}

\begin{prob}
{\tds.}{a set $X$ and a collection $\C$ of subsets of $X$.}{\textbf{True} if there are two disjoint sets $C,C' \in \C$, \textbf{False} otherwise.}
\end{prob}

\begin{prob}
{\matzero.}{two $(0-1)$-matrices $M,M'$ implemented as adjacency lists.}{\textbf{True} if $MM'$ contains a $0$, \textbf{False} otherwise.}
\end{prob}

\subsection*{Easy Problems}
\begin{prob}
{ComparabilityRecognition.}{an undirected graph $G$.}{\textbf{True} if there is a transitive orientation of $G$, \textbf{False} otherwise.}
\end{prob}

\begin{prob}
{TransitiveClosure.}{a directed acyclic graph $G$.}{the transitive closure of $G$, that is, the minimum subgraph of $G$ such that if there is a path from a vertex $v$ to a vertex $w$, $E(v,w)$ holds.}
\end{prob}

\subsection{Problems not Classified Yet}

\begin{prob}
{NetworkFlow.}{a graph $G=(V,E)$ and two vertices $v,w$.}{the maximum flow allowed by the network from $v$ to $w$.}
\end{prob}

\begin{prob}
{Radius.}{a graph $G=(V,E)$.}{the radius of $G$, that is, $\min_{v \in V} \max_{w \in V} d(v,w)$.}
\end{prob}

\begin{prob}
{TransitiveReduction.}{a directed acyclic graph $G$.}{the transitive reduction of $G$, that is, the minimum subgraph of $G$ having the same transitive closure.}
\end{prob}

\begin{prob}
{WeightedBiMaximumMatching.}{a bipartite weighted graph $G$.}{a maximum matching of $G$.}
\end{prob}

\end{document}